\newtheorem{theorem}{Theorem}
\newtheorem{lemma}[theorem]{Lemma}
\newtheorem{corollary}[theorem]{Corollary}
\newtheorem{definition}[theorem]{Definition}
\newtheorem{question}[theorem]{Question}
\def\picture#1#2#3{%
\begin{figure}[!ht]
  \centering
  \includegraphics[width=#3 \textwidth]{#1.eps}
  \caption{#2}
  \label{f:#1}
\end{figure}}
\def\dir{\overrightarrow}
\newcommand{\keywords}[1]{\par\addvspace\baselineskip
\noindent\keywordname\enspace\ignorespaces#1}
\def\gadget#1{\smallskip\noindent{\bf #1 gadget:}}
\def\prereq#1{\smallskip\noindent{\bf Prerequisites of #1:}}
\def\scenario#1{\smallskip\noindent{\bf Scenario of #1:}}
\def\calF{{\cal F}}
\def\calT{{\cal T}}
\def\calA{{\cal A}}
\newclass{\GuardDir}{GuardDir}
\newclass{\GuardUndir}{GuardUndir}
\newclass{\Guard}{Guard}
\newclass{\GuardUndirPSP}{GuardUndirPSP}
\newclass{\GuardPSP}{GuardPSP}
\newclass{\GuardRev}{GuardRev}
\newclass{\FormSAT}{FormSAT}
\newclass{\EXPTIME}{EXPTIME}
\def\COM{\mathit{COM}}
\def\VAR{\mathit{VAR}}
\def\ROBG{\mathit{ROBG}}
\def\ROBS{\mathit{ROBS}}
\def\COPG{\mathit{COPG}}
\def\COPS{\mathit{COPS}}
\def\BLOCK{\mathit{BLOCK}}
\def\FORCE{\mathit{FORCE}}
\def\var#1{\mathit{#1}}
\def\G{\mathcal{G}}
\def\F{\mathcal{F}}
\def\R{\mathbb{R}}
\def\N{\mathbb{N}}
\def\logred{\preceq_{\log}}
\begin{document}

\mainmatter

\title{The guarding game is \E-complete}
\author{R. \v S\'amal\inst{1}
\thanks{
  Partially supported by Karel Jane\v cek Science \& Research Endowment (NFKJ) grant 201201,
  by grant LL1201 ERC CZ of the Czech Ministry of Education, Youth and Sports and
  by grant GA \v{C}R P202-12-G061.}
\and T. Valla\inst{2}
\thanks{
Supported by the Centre of Excellence -- Inst.\ for Theor.\ Comp.\ Sci.
(project P202/12/G061 of GA~\v{C}R),
and by the GAUK Project 66010 of Charles University in Prague.}
}

\institute{
Computer Science Institute (CSI) of Charles University,\\
Malostransk\'e n\'am.~2/25, 118~00, Prague, Czech Republic \\
\email{samal@iuuk.mff.cuni.cz}
\medskip
\and
Faculty of Information Technology,\\
Czech Technical University,
Th\'akurova 9, 160~00, Prague~6, Czech Republic\\
\email{tomas.valla@fit.cvut.cz}
}

%\date{22. 4. 2013}

\maketitle

\begin{abstract}
The guarding game is a game in which several cops try to guard a region
in a (directed or undirected) graph against Robber.
Robber and the cops are placed on the vertices of the graph; they take turns
in moving to adjacent vertices (or staying), cops inside the guarded region, Robber
on the remaining vertices (the robber-region).
The goal of Robber is to enter the guarded region at a vertex with no cop on it.
The problem is to determine whether for a given graph and given number of cops
the cops are able to prevent Robber from entering the guarded region.
Fomin et al.\ [Fomin, Golovach, Hall, Mihal\'ak, Vicari, Widmayer: How to Guard a Graph?
Algorithmica 61(4), 839--856 (2011)]
proved that the problem is \NP-complete when the robber-region is restricted to a tree.
Further they prove that is it \PSPACE-complete when the robber-region is restricted
to a directed acyclic graph, and they ask about the problem complexity for arbitrary graphs.
In this paper we prove that the problem is \E-complete for arbitrary directed graphs.
\footnote[3]{This paper is based on an extended abstract which has appeared
in Proceedings of IWOCA 2011, Springer-Verlag LNCS \cite{SSV}.}
\keywords{pursuit game, cops and robber game, graph guarding game, computational complexity,
E-completeness, games on propositional formulas}
\end{abstract}

\section{Introduction and motivation}

The \emph{guarding game} $(G,V_C,c)$, introduced by Fomin et al.~\cite{HGG}, is
played on a graph $G=(V,E)$ (or directed graph $\dir G=(V,E)$) by two players,
the \emph{cop-player} and the \emph{robber-player},
each having his pawns ($c$ cops and one Robber, respectively) on~$V$.
There is a protected region (also called cop-region) $V_C\subset V$.
The remaining region $V\setminus V_C$ is called robber-region and denoted $V_R$.
Robber aims to enter $V_C$ by a move to a vertex of $V_C$ with no cop on it.
The cops try to prevent this.
The game is played in alternating turns.
In the first turn the robber-player places Robber on some vertex of $V_R$.
In the second turn the cop-player places his $c$ cops on vertices of $V_C$
(more cops can share one vertex).
In each subsequent turn the respective player can move each of his pawns to
a neighbouring vertex of the pawn's position (or leave it where it is).
However, the cops can move only inside $V_C$ and Robber can move only
on vertices with no cops.
At any time of the game both players know the positions of all pawns.
The robber-player wins if he is able to move Robber to some vertex of $V_C$
in a finite number of steps.
The cop-player wins if the cop-player can prevent the robber-player from
placing Robber on a vertex in $V_C$ indefinitely.
Note that $2|V|^{c+1}$ is the upper bound on the number of all possible positions of Robber and all cops,
so after that many turns the position has to repeat. Thus, if Robber can win,
he can win in less than $2|V|^{c+1}$ turns.
Consequently, we may define Robber to lose if he does not win in $2|V|^{c+1}$ turns.
Without loss of generality, we may assume that Robber never passes his move.

For a given graph $G$ and guarded region $V_C$, the task is to find
the minimum number $c$ such that cop-player wins. Note that this problem
is polynomially equivalent with the problem
of determining the outcome of the game for a fixed number $c$ of cops.

The guarding game is a member of a big class called pursuit-evasion games,
see, e.g., Alspach~\cite{Als} or Bonato and Nowakowski~\cite{BoNo} for introduction and survey.
Pursuit-evasion games is a family of problems in which one group attempts to
track down members of another group in some environment.
Probably the best known variant of these games is called the Cops-and-Robber game.
In this game a graph $G$ is given, together with the number $c$ of cops
and one robber. The cops are placed at the vertices in the first turn, then the robber is placed in the second turn.
In alternating turns, the cops may move to a neighbouring vertex, and the robber may move to a neighbouring vertex.
The question is whether the cops have a strategy to capture the robber, which means at some point
enter a vertex with the robber.

The Cops-and-Robber game was first defined for one cop by Winkler and Nowakowski~\cite{WiNo} and by Quilliot~\cite{Qui}.
Aigner and Fromme~\cite{AigFro} initiated the study of the problem with several cops.
The minimum number of cops required to capture the robber in a given graph is called the cop number of the graph.
The guarding game is thus a natural variant of the Cops-and-Robber game.
The complexity of the decision problem related to the Cops-and-Robber game
was studied by Goldstein and Reingold~\cite{GoRe}. They have shown that
if the number of cops is not fixed and if either the graph is directed or
initial positions are given, then the problem is \E-complete.
Also, recently Mamino~\cite{Mam} has shown that the Cops-and-Robber decision problem
without any further restriction is \PSPACE-hard.
Very recently, Kinnersley~\cite{Kinn} published a proof that the computational complexity
of Cops-and-Robber decision problem without further restrictions is \E-complete.
Unfortunately, the previous papers \cite{GoRe,Kinn} use a misleading notion, using \EXPTIME\ instead of \E.

Another interesting variant is the ``fast robber'' game,
which is studied in Fomin et al.~\cite{FastR}.
In this version, the robber may move $s$-times faster than the cops.
The authors prove that the problem of computing the minimum number of cops
such that the fast Cops-and-Robber game is cop-win, is \NP-complete
and the parameterised version (where the parameter is the number of cops) is \W[2]-complete.
See the annotated bibliography~\cite{FoThi} for reference on further topics.

A different well-studied problem, the \emph{Eternal Domination} problem
(also known as \emph{Eternal Security}) is strongly related to the guarding game.
The objective in the Eternal Domination, introduced by Burger et al.~\cite{ED2},
is to place the minimum number of guards
on the vertices of a graph $G$ such that the guards can protect the vertices of $G$
from an infinite sequence of attacks. In response to an attack of an unguarded vertex $v$,
at least one guard must move to $v$ and the other guards can either stay put,
or move to adjacent vertices.
The minimum number of guards needed to protect the vertices of $G$ is denoted by $\gamma_\infty(G)$.

The Eternal Domination problem is in fact a special case of the guarding game.
This can be seen as follows. Let $G$ be a graph on $n$ vertices and we construct a graph $H$
from $G$ by adding a clique $K_n$ on $n$ vertices and connecting the clique and $G$
by $n$ edges which form a perfect matching.
The cop-region is $V(G)$ and the robber-region is $V(K_n)$. Now $G$ has an eternal
dominating set of size $k$ if and only if $k$ cops can guard $V(G)$.

Several variants of the Eternal Domination are studied and various results are known.
For example in \cite{ED1}, the variant when only one guard may move in a turn is studied,
and it is shown that all graphs not having $K_4$ as a minor satisfy $\gamma^1_\infty(G)=\theta(G)$
where $\theta(G)$ denotes the clique covering number of $G$ and $\gamma^1_\infty(G)$ denotes
the minimum number of guards needed to protect $G$ when only one guard may move in a turn
(so called \emph{eternal $1$-security}).
Also, for the eternal 1-security, $\alpha(G)\le\gamma^1_\infty(G)\le \theta(G)$ holds for every graph $G$
where $\alpha(G)$ is the independence number,
and if $\alpha(G)=2$ then $\gamma^1_\infty(G)\le 3$ \cite{ED3}.
Klostermeyer and MacGillivray~\cite{ED6} proved that $\gamma^1_\infty(G) \le \alpha(G)(\alpha(G)+1)/2$.
In \cite{ED4} it is shown that
there are graphs $G$ for which $\gamma^1_\infty(G)\ge{\alpha(G)+1\choose 2}$.
In \cite{ED3} the following conjecture is posed: if $\alpha(G)=\gamma^1_\infty(G)$ for a graph $G$,
then the $\alpha(G)$ is equal to the chromatic number of the complement of the graph.
This conjecture was proved in \cite{ED7} for graphs with $\alpha(G)=2$.

Various complexity-related question were also studied. For example, in \cite{ED5}
it is shown that the decision problem of the eternal 1-security is \co-$\NP^\NP$-hard
(in fact, there is a completeness proof in \cite{ED5} but later an error in the proof was found).
Further graph-theory and algorithmic issues are discussed in \cite{ED8}:
For the eternal 1-security, the relationship between the number of guards and the independence and
clique covering numbers of the graph is studied, results concerning which
triples of these parameters can be attained by some graph are given, and the
exact value of the number of guards for graphs in certain classes is determined.
For the Eternal Domination, a linear algorithm to determine the minimum number
of guards necessary to defend a tree is given in \cite{ED8}.

\smallskip

In our paper we focus on the complexity issues of the following decision problem:
Given the guarding game $\G=(G,V_C,c)$, who has the winning strategy?

Let us define the computational problem precisely.
In measuring the computational complexity, our model is the
deterministic one-tape Turing Machine.

\begin{definition}
The \emph{guarding decision problem} (\Guard) is,
given a guarding game $(\dir G,V_C,c)$ where $\dir G$ is a directed graph, to decide
whether it is a cop-win game or a robber-win game.
%Analogously, we define the \emph{undirected guarding decision problem} (\GuardUndir) with the difference that
%the underlying graph $G$ is undirected.
%Similarly we define the \emph{guarding decision problem} (\Guard) when there are no restrictions on the
%underlying graph.
More precisely,
we specify a (binary) encoding of the graph $\dir G$, subset $V_C$ and starting positions;
the language \Guard\ then consists of all words
that encode a configuration winning for the robber-player.
\end{definition}

We may also define the \emph{guarding problem}, which is,
given a directed or undirected graph $G$ and a cop-region $V_C\subseteq V(G)$,
to compute the minimum number $c$ such that the $(G,V_C,c)$ is a cop-win.
It is easy to see, that the guarding problem can be in polynomial time reduced
to \Guard\ by trying all possible values of $c$.
\Guard\ was introduced and studied by Fomin et al.~\cite{HGG}.
The computational complexity of \Guard\ depends heavily on the chosen restrictions on the graph $G$.
In particular, in~\cite{HGG} the authors show that
if Robber's region is only a
path, then the problem can be solved in polynomial time, and when Robber moves in a tree
(or even in a star), then the problem is \NP-complete. Furthermore, if
Robber is moving in a directed acyclic graph, the problem becomes \PSPACE-complete.
Later Fomin, Golovach and Lokshtanov~\cite{Intruder} studied the \emph{reverse guarding game}
with the same rules as in the guarding game, except that the cop-player plays first.
They proved in \cite{Intruder} that the related decision problem is \PSPACE-hard on undirected graphs.
Nagamochi \cite{Naga} has also shown that that the problem is \NP-complete
even if $V_R$ induces a 3-star and that the problem is polynomially solvable if
$V_R$ induces a cycle.
Also, Thirumala Reddy, Sai Krishna and Pandu Rangan proved \cite{TSP}
that if the robber-region is an arbitrary undirected graph, then the decision
problem is \PSPACE-hard.

Fomin et al.~\cite{HGG} asked the following question.

\begin{question}\label{q:fomin}
Is \Guard\ \PSPACE-complete?
\end{question}

Let us consider the class $\E=\DTIME(2^{O(n)})$ of languages recognisable by a deterministic
Turing machine in time $2^{O(n)}$, where $n$ is the input size.
In pursuit of Question~\ref{q:fomin} we prove the following result.

\begin{theorem}\label{t:main}
\Guard\ is \E-complete under log-space reductions.
Furthermore, $\E\logred\Guard$ via length order $n(\log n)^2$.
\end{theorem}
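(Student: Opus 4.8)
The plan is to prove both directions of \E-completeness separately: containment (\Guard $\in$ \E) and hardness ($\E \logred \Guard$).

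\medskip
\noindent\textbf{Containment.}
First I would show \Guard $\in \E$ by the standard configuration-graph / backward-induction argument. A \emph{configuration} records the position of Robber together with the positions of all $c$ cops and whose turn it is. As the excerpt notes, there are at most $2|V|^{c+1}$ configurations, and since $c$ can be as large as $|V|$ this count is $2^{O(n)}$ where $n$ is the input size (the encoding of a game with $c$ cops needs $\Omega(c)$ bits, or we may encode cops as a multiset and argue the bound directly). I would build the directed game graph on these configurations and compute the set of robber-winning configurations by the usual attractor/retrograde analysis: a robber-to-move configuration is winning if some successor is winning; a cop-to-move configuration is winning (for Robber) if all successors are winning; iterate to a fixed point. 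This fixed-point computation runs in time polynomial in the number of configurations, hence in $2^{O(n)}$, and then I check whether the encoded start configuration lies in the winning set. This places \Guard\ in \E.

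\medskip
\noindent\textbf{Hardness.}
The heart of the paper is the reduction $\E \logred \Guard$. Since \E\ is not known to have a single natural complete problem under log-space reductions the way, say, \PSPACE\ does via \FormSAT-style games, I would reduce from a known \E-complete pursuit game. The natural source is the Cops-and-Robber decision problem with initial positions on directed graphs, shown \E-complete by Goldstein and Reingold~\cite{GoRe}; alternatively one reduces from an explicit \E-complete combinatorial game on formulas (the \FormSAT/\GuardDir-style machinery the macros suggest the authors develop). The strategy is to encode an instance of such a game as a guarding game by building gadgets: variable gadgets, clause/formula gadgets, and synchronisation gadgets that force the two players to simulate the moves of the source game in lockstep, with Robber's ability to ever reach $V_C$ corresponding exactly to a win in the source game. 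Because the source problem is about directed graphs and an unbounded number of cops, the guarding game's directed edges and parameter $c$ give exactly the expressive power needed, which is why the theorem is stated for arbitrary directed graphs rather than answering Question~\ref{q:fomin} in the \PSPACE\ regime.

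\medskip
\noindent\textbf{Main obstacle and the length bound.}
The hard part will be designing the gadgets so that (i) every ``cheating'' deviation by either player is immediately punished — a premature or illegal move by Robber leaves him trapped in $V_R$ forever, while a cop failing to track the simulation opens an undefended entry into $V_C$ — and (ii) the whole construction is both log-space computable and small enough to achieve the promised length order $n(\log n)^2$. Controlling the blow-up is the delicate quantitative point: a naive simulation of an exponential-time machine or of the source game could incur polynomial or worse overhead, so I would need a careful, compact encoding (for instance encoding indices in binary and routing them through $O(\log n)$-depth gadgets, contributing the $(\log n)^2$ factor) and verify that each gadget can be written down by a log-space transducer. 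Finally I would prove the reduction correct by establishing, via a strategy-translation argument in both directions, that Robber wins the constructed guarding game if and only if the evader/robber wins the source \E-complete game, thereby completing $\E \logred \Guard$ and, together with the containment, Theorem~\ref{t:main}.
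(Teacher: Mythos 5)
Your containment argument goes wrong exactly at the point that needs care. You assert that the naive bound $2|V|^{c+1}$ on the number of configurations is $2^{O(n)}$ ``since $c$ can be as large as $|V|$''; this is backwards. With $c\approx|V|\approx n$ that bound is $n^{n+1}=2^{\Theta(n\log n)}$, which is super-exponential, and your unary-encoding fallback does not repair it: even with $n\ge c$ one only gets $2^{O(n\log n)}$. The multiset remark you make in passing is not an optional alternative but the essential fix (it is what the paper does in Lemma~\ref{l:etime}): because the cops are indistinguishable, the number of cop placements is $\binom{|V_C|+c-1}{c}\le 2^{n+c-1}$, so the configuration graph has size $2^{O(n)}$ and the retrograde analysis runs in time $2^{O(n)}$. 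That counting argument has to be carried out, not mentioned as an aside.

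The hardness plan has two genuine gaps. First, your primary source problem, the Goldstein--Reingold Cops-and-Robber game with initial positions~\cite{GoRe}, gives you no handle on the quantitative clause of the theorem: to prove $\E\logred\Guard$ \emph{via length order} $n(\log n)^2$ you must control the size blow-up of the entire reduction chain, and your proposed mechanism for the $(\log n)^2$ factor (binary indices routed through $O(\log n)$-depth gadgets) is a guess, not a construction. The paper instead reduces from Stockmeyer--Chandra's formula game \FormSAT, which comes with the bound $\E\logred\FormSAT$ via length order $n\log n$ (Theorem~\ref{t:stocha}), and then gives an explicit gadget reduction producing an instance of size $O(m\log m)$ from a formula-game instance of size $m$ (Lemma~\ref{l:logspace}); composing the two yields $n(\log n)^2$. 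Second, you never address the placement phase: in \Guard\ Robber chooses his starting vertex first and the cops then place themselves anywhere in $V_C$, so a simulation of any source game must \emph{force} both players into the intended initial configuration. The paper isolates this as a separate step: it first proves hardness for the intermediate problem \GuardPSP\ with prescribed starting positions (Theorem~\ref{t:prescribedgame}), making sure the constructed graph satisfies specific structural properties (no edge into $r$, at most one cop per vertex, border vertices of out-degree one), and then applies a forcing construction (Lemma~\ref{l:force}) that adds a pendant start vertex and $m$ threatened dummy vertices so that any deviation from the prescribed placement loses immediately. Without an analogue of this step, your reduction would establish hardness only for the prescribed-positions variant, not for \Guard\ itself.
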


We defer the precise technical definitions of log-space reductions and length
order to the end of Section~\ref{s:mainproof}.
Note that the log-space reduction via certain length order
is a more subtle definition of reduction than the general log-space reduction.
The motivation is as follows.
The reduction $\mathcal{L}\logred B$ via length order $\ell(n)$ means
that given some instance $A\in L\in \mathcal{L}$, we obtain an
instance $B$ equivalent to $A$, which size has increased to $\ell(|A|)$ (where $|A|$ is measured in bits).
Therefore, informally, given a solver $S$ for $L$ working in time $2^{O(n)}$,
this proves that the time needed to solve $B$ is at least $2^{O(\ell^{-1}(n))}$
where $\ell^{-1}$ is the inverse function to $\ell$.

Immediately, we get the following corollary.

\begin{corollary}\label{c:main}
The guarding problem is \E-complete under log-space reductions.
\end{corollary}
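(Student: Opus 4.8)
The plan is to derive the corollary from Theorem~\ref{t:main} by showing that the guarding problem (the optimization version, with free placement) is log-space equivalent to \Guard, so that \E-completeness transfers. I would split the argument into membership in \E and \E-hardness.

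For membership I would first bound the number of cops that can ever be relevant: placing one cop on every vertex of $V_C$ and never moving is trivially a cop-win, so the optimal number $c^\ast$ never exceeds $|V_C|\le|V|$. Hence the guarding problem reduces to at most $|V_C|+1$ calls to \Guard\ (a monotone scan, or a binary search, over $c\in\{0,1,\dots,|V_C|\}$, exploiting that cop-win is monotone in $c$); this is log-space computable, and since $\E$ is closed under such reductions and $\Guard\in\E$ by Theorem~\ref{t:main}, the guarding problem lies in $\E$. This is exactly the reduction already noted in Section~\ref{s:mainproof}, now made effective by the cap $c\le|V_C|$. Alternatively, membership can be seen directly: because the cops are indistinguishable and we may truncate $c$ at $|V_C|$, the number of distinct configurations is at most $2\cdot|V|\cdot\binom{|V_C|+c}{c}=2^{O(n)}$, so a standard retrograde (backward-induction) analysis of the safety game on the configuration graph decides the winner in time $2^{O(n)}$.

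For \E-hardness I would reduce \Guard\ to the guarding problem, so that the \E-hardness of \Guard\ forces \E-hardness of the optimization version. The subtle point, and the step I expect to be the main obstacle, is the mismatch between the two formulations: \Guard\ fixes both the number of cops and the starting positions, whereas the guarding problem asks for the minimum $c$ in the free-placement game (robber places first, cops respond). I would resolve the numerical part easily—given the value $c^\ast$ returned by the guarding problem, one has $c\ge c^\ast$ iff the instance is a cop-win, so comparing the input $c$ with the output is a log-space post-processing step. The placement part is the delicate one: I would argue that the instances produced by the hardness reduction underlying $\E\logred\Guard$ already have their intended initial configuration forced by the gadget structure, each gadget admitting essentially one reasonable initial placement for either player, so that the free-placement game on the constructed graph has the same winner as the specified fixed-position game. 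Where forcing is not automatic I would attach auxiliary gadgets—dead-end components of $V_R$ from which the robber can provably never reach $V_C$, so that the intended robber start is his only threatening option, together with correspondingly structured components of $V_C$ that compel the cops' initial placement—so that neither player can gain by deviating under free play.

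Granting this, the guarding problem on the (possibly gadget-augmented) instance returns a threshold $c^\ast$ that decides the original \Guard\ instance, exhibiting $\Guard\logred$ (guarding problem) and hence \E-hardness; combined with membership this yields \E-completeness. The real work lies in verifying the forcing property of the gadgets—that under free placement the optimal play of both players coincides with the prescribed starting configuration—and this is what would have to be checked against the explicit construction used for Theorem~\ref{t:main}.
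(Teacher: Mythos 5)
Your core reduction is right and is essentially the paper's (implicit) argument: the paper states this corollary as immediate from Theorem~\ref{t:main}, the only content being the decision/optimization equivalence you give --- membership via the cap $c^\ast\le|V_C|$ (one cop per vertex of $V_C$ wins trivially) and a scan over $c$ using $\Guard\in\E$, hardness via the threshold comparison ``$(\dir G,V_C,c)$ is cop-win iff $c\ge c^\ast$'', which is valid by monotonicity in $c$ (an extra cop may shadow another, and cops only ever obstruct Robber).

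However, the step you flag as ``the main obstacle'' and ``the real work'' --- forcing the initial placements under free play, possibly by attaching new dead-end gadgets --- rests on a misreading of the definitions and is entirely unnecessary. \Guard\ is \emph{not} a fixed-position problem: it is the decision problem for the game $(\dir G,V_C,c)$ in which Robber is placed in the first turn and the cops in the second, i.e.\ exactly the free-placement game that the guarding problem optimizes over. The prescribed-position variant is the separate problem \GuardPSP, and the passage from prescribed to free placement is Lemma~\ref{l:force} (``Forcing the starting positions''), which is already invoked \emph{inside} the proof of Theorem~\ref{t:main}; the corollary simply inherits it. So there is no mismatch to repair: given a \Guard\ instance $(\dir G,V_C,c)$, one call to the guarding problem on $(\dir G,V_C)$ followed by the comparison with $c$ finishes the hardness direction, with no gadget verification at all. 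As written, your proof defers its crucial step to an unchecked claim about gadget forcing; once you recognize that this step duplicates Lemma~\ref{l:force} and delete it, what remains is a correct and complete proof matching the paper's intent. (A minor caveat: your scan over $c$ is a log-space \emph{Turing} reduction rather than a many-one one, but the paper is equally informal on this point, noting only that the guarding problem reduces to \Guard\ ``by trying all possible values of $c$''.)
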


Using the time-hierarchy theorem, we get the following consequence (for the proof see Section~\ref{s:mainproof}).

\begin{corollary}\label{c:maintime}
There is a constant $c>1$ so that if a deterministic one-tape Turing machine
decides \Guard\ within time $t(n)$, then $t(n) > c^{n/(\log n)^2}$
for infinitely many~$n$.
\end{corollary}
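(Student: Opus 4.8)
The plan is to obtain the lower bound as a careful but essentially routine consequence of the \E-completeness of \Guard, combined with the efficiency of the reduction promised by Theorem~\ref{t:main} and the deterministic time-hierarchy theorem. First I would fix a concrete hard language inside \E: by the time-hierarchy theorem there is a language $L$ decidable by a deterministic one-tape machine in time $2^{O(n)}$, so $L\in\E$, but not decidable in time $2^{\epsilon n}$ for some fixed $\epsilon>0$. Taking a sufficiently large constant multiple of $n$ in the upper time bound leaves room for the (at worst polynomial) simulation overhead of the one-tape model while keeping $L$ inside $\E=\DTIME(2^{O(n)})$. Thus, for \emph{every} machine deciding $L$, there are infinitely many input lengths on which it runs for more than $2^{\epsilon n}$ steps.

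Next I would import the reduction. By Theorem~\ref{t:main} we have $\E\logred\Guard$ via length order $\ell(n)=n(\log n)^2$, so an instance of $L$ of bit-length $m$ is mapped, in logarithmic space and hence polynomial time, to an equivalent instance of \Guard\ of size $n=\ell(m)=\Theta\!\big(m(\log m)^2\big)$. The decisive computation is the inversion of $\ell$: since $\log\ell(m)=(1+o(1))\log m$, one gets $\ell(m)/(\log\ell(m))^2=(1+o(1))m$, equivalently $\ell^{-1}(n)=(1+o(1))\,n/(\log n)^2$. This is exactly where the denominator $(\log n)^2$ in the claimed bound originates.

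The transfer is then a contrapositive argument. Suppose some one-tape machine $M$ decides \Guard\ in time $t(n)$. Composing the reduction with $M$ yields a one-tape decider for $L$ running in time $\mathrm{poly}(m)+t(\ell(m))$. Because $L\notin\DTIME(2^{\epsilon n})$, this composite must, for infinitely many $m$, exceed $2^{\epsilon m}$ steps; as $\mathrm{poly}(m)=2^{o(m)}$ is negligible, we obtain $t(\ell(m))>2^{\epsilon m-1}$ for infinitely many $m$. Writing $n=\ell(m)$ and using $m\ge n/(\log n)^2$ for large $n$ gives $t(n)>\tfrac12\big(2^{\epsilon}\big)^{n/(\log n)^2}$ on the corresponding infinite set of lengths $n$. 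Choosing any constant $1<c<2^{\epsilon}$ absorbs both the factor $\tfrac12$ and any lower-order loss, so $t(n)>c^{\,n/(\log n)^2}$ for infinitely many $n$, as required.

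The main obstacle is bookkeeping rather than ideas: I must ensure that every loss incurred along the way — the multiplicative constant hidden in the length order, the polynomial reduction overhead, the additive $-1$, and the $o(1)$ factor from inverting $\ell$ — is swallowed by shrinking the base $c$ while still keeping $c>1$. A secondary point to verify is that the time-hierarchy separation is invoked in the one-tape model used throughout the paper; this is handled by allowing a generous (still linear-in-the-exponent) gap between the upper and lower time bounds for $L$, which the definition $\E=\DTIME(2^{O(n)})$ comfortably permits.
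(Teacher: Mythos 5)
Your proposal is correct and takes essentially the same route as the paper's own proof: both invoke the deterministic time-hierarchy theorem to obtain a language in \E\ that requires exponential time, push a hard instance through the length-order-$n(\log n)^2$ reduction of Theorem~\ref{t:main}, and invert the length order to obtain the exponent $n/(\log n)^2$. Your write-up merely makes explicit the bookkeeping (the polynomial reduction overhead, the constants absorbed into the base $c$, and the estimate $\ell^{-1}(n)=\Theta(n/(\log n)^2)$) that the paper compresses into the remark ``it remains to check that $n = \Omega(m/(\log m)^2)$.''
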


Let us explain here the relevance of Theorem~\ref{t:main} to Question~\ref{q:fomin}.
Very little is known about how the class \E\ is related to \PSPACE.
It is only known \cite{Book} that $\E\neq \PSPACE$.
The following corollary shows that positive answer to Question~\ref{q:fomin}
would give a relation between these two complexity classes.
This gives unexpected and strong incentive to find positive answer to Question~\ref{q:fomin}.
(On the other hand, to the skeptics among us, it may also indicate that negative answer is more likely.)

\begin{corollary}
If the answer to Question~\ref{q:fomin} is yes, then $\E \subseteq \PSPACE$.
\end{corollary}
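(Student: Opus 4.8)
The plan is to combine the \E-hardness half of Theorem~\ref{t:main} with the standard fact that \PSPACE\ is closed under log-space many-one reductions. First I would unpack what a positive answer to Question~\ref{q:fomin} gives us: if \Guard\ is \PSPACE-complete, then in particular \Guard\ $\in \PSPACE$. The \PSPACE-hardness part of that hypothesis is irrelevant here; all I need is the membership $\Guard \in \PSPACE$.

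Next I would invoke Theorem~\ref{t:main}, which asserts that \Guard\ is \E-complete under log-space reductions. The relevant direction is \E-hardness: for every language $L \in \E$ we have $L \logred \Guard$. So fix an arbitrary $L \in \E$ together with the log-space reduction $f$ witnessing $L \logred \Guard$, so that $x \in L$ if and only if $f(x) \in \Guard$.

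It then remains to show $L \in \PSPACE$, and this is exactly the closure of \PSPACE\ under log-space reductions. The argument is the routine composition: on input $x$, a machine for $L$ runs the \PSPACE\ decision procedure for \Guard\ on the instance $f(x)$. Since $f$ is computable in logarithmic work space its output has length polynomial in $|x|$, so the composed computation uses only polynomial space in $|x|$ — whether one materialises $f(x)$ on a polynomially long tape or recomputes its bits on demand to stay within the cleaner log-space bookkeeping. Hence $L \in \PSPACE$. As $L \in \E$ was arbitrary, we conclude $\E \subseteq \PSPACE$.

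I do not expect a genuine obstacle here: the corollary is a direct consequence of Theorem~\ref{t:main} and a textbook closure property. The only point deserving a line of care is confirming that the log-space reduction does not inflate the space budget — which it cannot, since its image is polynomially bounded — so that the composition indeed stays inside \PSPACE\ rather than merely inside some larger space class.
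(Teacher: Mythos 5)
Your proposal is correct and follows essentially the same route as the paper's own proof: take an arbitrary $L \in \E$, use the \E-hardness direction of Theorem~\ref{t:main} to reduce $L$ to \Guard\ in log-space, and use the hypothesized membership $\Guard \in \PSPACE$ to conclude $L \in \PSPACE$. The only difference is that you spell out the closure of \PSPACE\ under log-space reductions (polynomially bounded output of the reduction, composition of the machines), which the paper leaves implicit in its final ``Consequently, $L \in \PSPACE$.''
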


\begin{proof}
Suppose the guarding problem is \PSPACE-complete.
Let $L\in \E$.
Then (by Theorem~\ref{t:main}) an instance of $L$ can be reduced by a log-space reduction to an instance of the
guarding game, which we suppose to be in \PSPACE. Consequently, $L \in \PSPACE$.
\qed\end{proof}

Let us mention here, that the distance between \PSPACE\ and \E\ is rather small
in the realm of guarding games: The game, that we will construct to show \E-hardness,
has the property that after removing a single edge, the robber-region becomes
a directed acyclic graph, and for such, the decision problem is in \PSPACE\ by
a result of Fomin et al.~\cite{HGG}.

In a sense, we show analogous result for the guarding game as Goldstein and Reingold~\cite{GoRe}
have shown for the original Cops-and-Robber game.

\section{The \E-completeness proof}

\subsection{An algorithm in \E}

In order to prove \E-completeness of \Guard,
we first note that the problem is in \E.

\begin{definition}
We define the \emph{guarding game with prescribed starting positions} $\G=(G,V_C,c,S,r)$,
where $S:\{1,\dots,c\}\to V_C$ is the initial placement of cops and $r\in V_R$ is
the initial placement Robber, and we consider the guarding game $(G,V_C,c)$ after the two initial
turns where the cops were placed as described by $S$ and Robber was placed to $r$.
The \emph{guarding decision problem with prescribed starting positions} (\GuardPSP) is,
given a guarding game with prescribed starting positions $(\dir G,V_C,c,S,r)$ where $\dir G$ is a directed graph,
to decide whether it is a cop-win game or a robber-win game.
%The \emph{undirected guarding decision problem with prescribed starting positions}  (\GuardUndirPSP) is defined analogously.
\end{definition}

\begin{lemma}\label{l:etime}
The guarding decision problem is in \E.
\end{lemma}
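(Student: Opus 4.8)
The plan is to show that the guarding decision problem can be solved by a deterministic Turing machine running in time $2^{O(n)}$, where $n$ is the size of the input encoding. The natural approach is to reduce to a reachability computation on the \emph{configuration graph} of the game. A single configuration records the position of Robber together with the positions of all $c$ cops, plus one bit indicating whose turn it is. As already noted in the excerpt, the number of joint positions is at most $2|V|^{c+1}$, so the configuration graph has at most $O(|V|^{c+1})$ vertices. Since $c$ and $|V|$ are both bounded by the input size $n$, this count is at most $2^{O(n\log n)}$ --- which, after I argue more carefully that $c$ may be taken to be at most $|V|$ (placing more than $|V|$ cops never helps, as cops can share vertices), gives a bound of the form $2^{O(n)}$ in the size of the instance.

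First I would formalise the configuration graph $\calA$. Its vertices are the tuples $(r, S, t)$ where $r\in V_R$ is Robber's vertex, $S\colon\{1,\dots,c\}\to V_C$ is the cop placement, and $t\in\{R,C\}$ marks the player to move; I would also add designated sink configurations for the two outcomes. From each robber-to-move configuration I would draw an edge to every configuration obtained by a legal Robber move (to an adjacent vertex not occupied by a cop, or staying), and from each cop-to-move configuration an edge to every legal simultaneous cop move inside $V_C$. A robber-to-move configuration in which Robber can step onto a cop-free vertex of $V_C$ is a robber-win sink. Then determining the winner is exactly solving the associated reachability/attractor problem on this finite game graph: I would compute, by the standard backward-induction (attractor) algorithm for reachability games, the set of configurations from which the robber-player can force a win. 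This computation runs in time polynomial in the size of $\calA$, hence in time $2^{O(n)}$.

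The key steps in order are: (i) bound the number of cops that need to be considered and thereby bound $|V(\calA)|$ by $2^{O(n)}$; (ii) observe that the full game is a first-move quantifier alternation over this graph --- Robber first chooses $r$, then cops choose $S$ --- so the top-level decision is whether \emph{for some} $r$ and \emph{for all} $S$ the resulting prescribed-starting-position configuration is robber-winning, which is just a finite min/max over the attractor set already computed; and (iii) note that the attractor algorithm is a standard polynomial-time procedure on an explicitly constructed graph, and that constructing $\calA$ and running the algorithm can be done within time $2^{O(n)}$ on a deterministic one-tape Turing machine. It is convenient here to relate the problem to \GuardPSP: for fixed starting positions the winner is decided by pure reachability on $\calA$, and the unrestricted problem adds only the outer $\exists r\,\forall S$ quantification.

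The main obstacle I expect is the bookkeeping needed to make the $2^{O(n)}$ bound honest rather than $2^{O(n\log n)}$. The naive count $|V|^{c+1}$ carries an extra $\log|V|$ factor in the exponent, so I must argue that this does not escape the class \E. The clean way is to bound the number of \emph{distinct} cop configurations not by $|V|^{c}$ but by the number of multisets of size $c$ drawn from $V_C$ (cops are indistinguishable), namely $\binom{|V_C|+c-1}{c}$, and to cap $c$ at $|V_C|$ since additional cops are never useful; even so one should check that $\binom{2|V|}{|V|}=2^{O(|V|)}$ and that $|V|=O(n)$, so that the whole configuration count is genuinely $2^{O(n)}$. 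A secondary, more technical point is that the machine is a one-tape Turing machine, so I would remark that the quadratic simulation overhead of a one-tape machine is absorbed harmlessly into the $2^{O(n)}$ bound, and that the attractor fixpoint can be computed within this budget. Once these counting and model-of-computation details are settled, membership in \E\ follows directly.
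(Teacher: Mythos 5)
Your proposal is correct and follows essentially the same route as the paper: build the explicit configuration graph, run the standard backwards-labelling (attractor) algorithm, and bound the number of configurations by $2^{O(n)}$ using the fact that cops are indistinguishable, i.e., by the multiset count $\binom{|V_C|+c-1}{c}$. If anything, you are slightly more careful than the paper on two points it glosses over --- explicitly capping $c$ at $|V_C|$ (so the binomial bound is honestly $2^{O(n)}$ even if $c$ is given in binary) and spelling out the $\exists r\,\forall S$ handling of the initial placement turns --- but these are refinements of the same argument, not a different one.
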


\begin{proof}
We need to show that there is an algorithm deciding the outcome of a given guarding game $\G=(G,V_C,c)$ in $2^{O(n)}$ time,
where $n$ is the size of the input~$\G$ in some encoding.
Let us first consider the normal (i.e., not reverse) version of the problem.
Consider the directed graph $H$ of all configurations of the game~$\G$
where edges mark the legal moves of $\G$.

%-- the vertices of $H$ are all possible legal positions
%of all cops and the robber, together with the information whose turn it is. There is also a starting vertex~$s$
%representing the empty board and the vertices $r_1,\dots,r_{|V_R|}$
%representing every possible initial placement of the robber with still no cops placed.
%More precisely,
%$$
%V(H)=\left\{(C,r,t) \mid  C\colon\{1,\dots,c\}\to V_C,\, r\in V(G),\, t\in\{0,1\}\right\}
 %\cup \{s,r_1,\dots,r_{|V_R|}\}.
%$$
%Here $t=0$ denotes it is the robber's turn, $t=1$ denotes it is the cop's turn, $C$~is the position
%of cops and $r$ is the position of the robber.
%There are edges from~$s$ to every vertex $r_i$ and for every $r_i$ there are
%edges to every possible initial subsequent placement of cops.
%The edge $(k_1,k_2)$ belongs to $E(H)$ if and only if $k_1$ is cop turn and $k_2$ is robber turn (or vice versa)
%and the pawns of $k_1$ can be legally moved into pawn positions of $k_2$.

We use the standard backwards-labelling algorithm: first we label all final states of the game,
then all states that can be finished by a single move, etc. In this way we can decide the outcome
of every configuration in time polynomial in the size of the graph $H$.

%
%Let us denote the robber-winning configurations by $W_R$.
%\begin{enumerate}
%\item Construct the graph $H$.
%\item Initially set $W_R$ to be all vertices that are a win for the robber-player, i.e., positions where the
%robber stands on some $v\in V_C$ and there is no cop on $v$.
%\item Add to $W_R$ all vertices $v$ where it is the robber's turn and there is an edge $(v,w)\in E(H)$ and $w\in W_R$.
%\item Add to $W_R$ all vertices $v$ where cop is on turn and for every edge $(v,w)\in E(H)$ the vertex $w\in W_R$.
%\item Repeat the steps 3 and 4 for $|V(H)|$-times.
%\item If $s\in W_R$ the game $\G$ is robber-win, otherwise the game $\G$ is cop-win.
%\end{enumerate}
%
%Note that each step can be computed in time polynomial in the size of $H$.
It remains to show that the size of $H$ is $2^{O(n)}$. As mentioned in the
introduction, the simplest upper bound on $|V(H)|$ is $2|V(G)|^{c+1}$,
which is unfortunately super-exponential in $n$ if $c$ is close to $n$.
To find a better upper bound, we use the fact that the cops are mutually indistinguishable.
There are at most $|V(G)|$ positions of Robber. Counting the number
of all positions of the cops is the classical problem of putting $c$ indistinguishable
balls into $|V_C|$ bins.
Then, taking into account also whose turn it is and the number of vertices $r_i$,
we get that $|V(H)|$ is bounded by
\[
|V(H)|\le 4|V(G)|{|V_C|+c-1\choose c}\le 4n{n+c-1\choose c} \le 4n2^{n+c-1} = 2^{O(n)}.
\]
Thus the total size of $H$ is $2^{O(n)}$ as well.
The proof for the reverse version of the game is analogous.
\qed\end{proof}

\subsection{Reduction from a formula-satisfying game}

Let us first study the problem after the second move, where both players
have already placed their pawns.
We reduce \GuardPSP\ from the following formula-satisfying game
which was described by Stockmeyer and Chandra~\cite{StoCha} (game $G_2$ in Section~3 of \cite{StoCha}).

A position in the formula-satisfying game configuration is a $4$-tuple $\F=(\tau, F_R(C,R), F_C(C,R), \alpha)$
where $\tau\in\{1,2\}$, $F_R$ and $F_C$ are formulas in 12-DNF
both defined on set of variables $C\cup R$, where $C$ and $R$ are disjoint
and $\alpha$ is the current $(C\cup R)$-assignment. The symbol $\tau$
serves only to differentiate the positions where the first or the second
player is to move.
Player I (II) moves by changing the values assigned to at most one variable in $R$ ($C$);
either player may pass since changing no variable amounts to a ``pass''.
Player I (II) wins if the formula $F_R$ ($F_C$) is true after some move of player I (II).
More precisely, player I can move from $(1,F_R,F_C,\alpha)$ to $(2,F_R,F_C,\alpha')$
in one move if and only if $\alpha'$ differs from $\alpha$ in the assignment given to at most one variable
in $R$ and $F_C$ is false under the assignment $\alpha$; the moves of player II
are defined symmetrically.

Let \FormSAT\ denote the language of all binary encodings of formula-satisfy\-ing game configurations
which are winning for player I.

Note, that if the game continues infinitely, then the starting position is not in \FormSAT,
as player~I did not win. We may as well consider the game won by player~II, which is consistent
with the rules of the cop-game.

The following result is known.

\begin{theorem}[Stockmeyer, Chandra \cite{StoCha}]\label{t:stocha}
\FormSAT\ is an \E-complete language under log-space reduction.
Moreover, $\E\logred\FormSAT$ via length order $n\log n$.
\end{theorem}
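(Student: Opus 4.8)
The plan is to prove the two halves of \E-completeness separately, with membership being routine and the hardness direction carrying essentially all of the content.

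For membership in \E, I would bound the game graph exactly as in the proof of Lemma~\ref{l:etime}. Throughout any play the formulas $F_R,F_C$ and the variable sets $C,R$ are fixed, so a position is just a pair $(\tau,\alpha)$ with $\tau\in\{1,2\}$ and $\alpha$ an assignment of $C\cup R$. Since every variable must be named inside the input, $|C\cup R|$ is at most the input length $n$, giving only $O(2^{n})$ positions, each with at most $|C\cup R|+1$ legal moves. Running the standard retrograde (backward-labelling) analysis --- mark every position in which the mover's own formula is already satisfied, then propagate ``winning for Player~I'' backwards along the move relation until a fixed point is reached --- decides the outcome in time polynomial in the number of positions, i.e.\ in $2^{O(n)}$. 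Hence $\FormSAT\in\E$.

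For hardness I would exploit the characterisation $\E=\mathrm{ASPACE}(O(n))$, that alternating linear space coincides with deterministic time $2^{O(n)}$; it is precisely the use of \emph{linear} alternating space that keeps the blow-up near-linear, since a linear-space configuration is described by only $O(n)$ Boolean variables. Given an input $w$ for such an alternating machine $M$, I encode one configuration of $M$ as an $O(n)$-bit string, storing the existentially-controlled bits in the variables $R$ (moved by Player~I) and the universally-controlled bits in $C$ (moved by Player~II). The two DNF formulas play a dual role: $F_R$ is made true exactly when the encoded configuration is accepting, so Player~I wins by driving the simulation to acceptance, while $F_C$ is made true the instant a player deviates from the prescribed bit-by-bit rewriting protocol, so any cheat immediately hands the game to the opponent. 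Because one move flips a single variable whereas one step of $M$ alters many configuration bits, each simulated step is spread over a fixed-length block of moves during which the inactive player is forced to pass; the formulas act as a referee enforcing this block structure and reconciling the game's strict turn alternation with the arbitrary alternation pattern of $M$.

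The main obstacle, and the crux of the Stockmeyer--Chandra construction, is designing this refereeing gadget so that three things hold simultaneously: (i) faithful play reproduces the computation of $M$ on $w$; (ii) every single-variable deviation by either player is caught by a \emph{constant-width} clause that satisfies the opponent's formula, which is what permits the formulas to stay within the allowed width~$12$ after introducing a bounded number of auxiliary control variables; and (iii) the whole object is emitted in log space. The length-order bound then follows from the accounting: the reduction uses $O(n)$ variables, each referenced in $F_R$ and $F_C$ by a binary index of $O(\log n)$ bits, so the two formulas together have size $O(n\log n)$, yielding $\E\logred\FormSAT$ via length order $n\log n$.
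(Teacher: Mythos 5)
The first thing to note is that the paper contains no proof of this statement: it is imported verbatim from Stockmeyer and Chandra \cite{StoCha} (their game $G_2$), and the paper's entire justification is the citation, so your attempt can only be measured against the original construction. Your membership half is sound and essentially forced: with $F_R$, $F_C$ fixed, a position is a pair $(\tau,\alpha)$, there are at most $2\cdot 2^{n}$ such positions since every variable is named in the input, and backward labelling of the game graph (the same device as in Lemma~\ref{l:etime}) decides the winner in time $2^{O(n)}$, with infinite play correctly counted as a loss for player~I.

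The hardness half is where the theorem actually lives, and there your text is a plan rather than a proof. Reducing from $\mathrm{ASPACE}(O(n))=\E$ is the right frame, but everything that makes the theorem true sits inside the sentence you defer: ``designing this refereeing gadget so that (i)--(iii) hold simultaneously.'' You never design it; you only list the properties it must have (constant-width clauses catching every single-variable deviation, forcing the idle player to pass, staying in 12-DNF, log-space constructibility, size $O(n\log n)$), and listing requirements is not establishing them. Moreover, the one concrete structural decision you do commit to --- statically storing ``existentially-controlled bits'' in $R$ and ``universally-controlled bits'' in $C$ --- cannot work as stated: in an alternating computation the same configuration bits (state, head position, tape cells) are rewritten by existential and universal steps alike, whereas in the formula game each variable is permanently owned by one player, who alone may change it. A correct reduction must break this symmetry, e.g.\ by letting one player own and rewrite (a copy of) the configuration while the other merely signals his choices through a bounded number of his own variables, with the DNF clauses making any unfaithful rewrite immediately fatal; carrying this out within clause width 12, with $O(n)$ variables and $O(n)$ clauses indexed by $O(\log n)$-bit names (whence the length order $n\log n$), is precisely the content of the Stockmeyer--Chandra construction and is absent from your proposal. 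As it stands, the hardness claim, and with it the length-order bound, is unproven.
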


Let us first informally sketch the reduction from $\F$ to $\G$,
i.e., simulating $\F$ by an equivalent guarding game $\G$.
The setting of variables is represented by positions of certain cops
so that only one of these cops may move at a time (otherwise the cop-player loses the game),
we call this device a Variable Cell.
The variables in~$C$ are under the control of the cop-player.
The variables in~$R$ are under the control of the robber-player (even though they are also
represented by cops).

In the following text, we construct a lot of gadgets.
First, we precisely define them and then we informally describe how particular gadgets
work, together with prerequisites needed for the gadget to work properly, and with the desired
outcome of each gadget.
Then we describe in detail the whole construction and how all gadgets are connected together.
Finally, we precisely prove that the gadgets (and the whole game) behave as we described earlier.

When describing the features of various gadgets, we will often use the term \emph{binding scenario}.
By binding scenario of a certain gadget (or even the whole game) we mean a flow of the game
that both players must follow: if a player deviates from it, he will lose the game.
The graph will be constructed in such a way, that if the players follow the binding scenario,
they will simulate the formula game $\F$.

Let us define the binding scenario more precisely.

\begin{definition}
Let $\G=(G,V_C,c,S,r)$ be an instance of the guarding game with prescribed positions.
Let $Q=(r_1,S_1,r_2,S_2,\dots,r_k,S_k)$ be a nonempty sequence such that $r_i\to r_{i+1}$ is a valid move of Robber
and $S_i \to S_{i+1}$ is a valid move of cops for every $i$.
We call such $Q$ a \emph{move sequence}. A set $\F$ of move sequences is called a \emph{scenario}.
We say that players \emph{follow the scenario $\F$} if they make a move sequence $Q\in\F$.
We say that the player $P$ \emph{deviates from the scenario $\F$}
if $P$ makes a move not permitted by $\F$, but until his move the players
did follow~$\F$.

We call a scenario $\F$ a \emph{binding scenario} when $\F$
satisfies the following condition:
If the cop-player deviates from $\F$, he will lose in at most ten turns.
If the robber-player deviates from $\F$, he ends in a losing state of the game.
\end{definition}

When describing a particular gadget, we also describe its scenario
and prerequisites which must be satisfied in order for the gadget to work properly.
Later, after we complete the whole construction, we prove that the presented scenarios
are in fact binding, if the prerequisites are satisfied.

There are four cyclically repeating phases of the game,
determined by the current position of Robber.
The binding scenario is as follows:
Robber cyclically goes through the following phases
marked by four special vertices and in different phases he can enter certain gadgets.
(Robber is allowed to pass. However, as we will show, it is never to his advantage.)

\begin{enumerate}
\item ``Robber Move'' ($\var{RM}$): In this step Robber can enter a \emph{Robber Switch} gadget, allowing
him to change setting of at most one variable in $R$.
The values of formula variables are represented by positions of special cops in gadgets called Variable Cells.
\item ``Robber Test'' ($\var{RT}$): In this step Robber may pass through the \emph{Robber Gate} into the protected region $V_C$,
provided that the formula $F_R$ is satisfied under the current setting of variables.
\item ``Cop Move'' ($\var{CM}$): In this step the cop player can change one variable in $C$.
This is realized by a gadget called \emph{Cop Switch}.
At the end of this phase Robber `waits' on vertex $\var{RW}$ for the cop to finish setting the variable.
\item ``Cop Test'' ($CT$): In this step, if the formula $F_C$ is satisfied under the current setting of variables,
the cops are able to block the entrance to the protected region forever (by temporarily leaving
the \emph{Cop Gate} gadget unguarded and sending a cop to block the entrance to $V_C$ that is provided
by the Robber Gate gadgets).
\end{enumerate}

%See Fig.~\ref{f:constr} for the overview of the construction.
%\picture{constr}{The sketch of the construction}{0.7}

To begin with the construction, we start with the basic five vertices
organised to a directed cycle $(\var{RM},\var{RT},\var{CM},\var{RW},CT)$,
we call it the \emph{Basic Cycle}.

\subsection{The Variable Cells}

To maintain the current setting of a variable $x$, we use the Variable Cell gadget,
there will be a separate Variable Cell for every variable.

\picture{varcell}{Variable Cell $\VAR(x)$}{0.3}

\gadget{Variable Cell}
For a variable $x\in C\cup R$ we introduce a \emph{Variable Cell} $\VAR(x)$,
which is a directed cycle $(T_x,\var{FT}_x,F_x,\var{TF}_x)$ (see Fig.~\ref{f:varcell}).

The informal description of the function of $\VAR(x)$ is as follows.
There is one cop (\emph{variable cop}) located in every $\VAR(x)$ and the position of the cop on vertices
$T_x$, $F_x$ represents the boolean values true and false, respectively.
The prescribed starting position of the variable cop is $T_x$ if $\alpha(x)$ is true,
and $F_x$ otherwise.
All the vertices of $\VAR(x)$ belong to $V_C$.

%The Cells are organised into blocks $C$ and $R$ based on the corresponding variable~$x$.

\subsection{Forcing and blocking}\label{s:forceblock}

To make the variable cops truly represent the boolean values and allow only
the changes of these values that follow the rules of the formula game,
we introduce two gadgets that will be used repeatedly.

We say that we \emph{block} a set $S \subseteq V_C$ by a vertex $u$,
when we add to our graph the \emph{Blocker} gadget $\BLOCK(u,S)$.
The Blocker $\BLOCK(u,S)$ consists of a directed path
$(u, p_{u,S}, q_{u,S})$ together with edges $(s,q_{u,S})$ for $s \in S$.
The vertex $p_{u,S}$ is in $V_R$, while $q_{u,S}$ is in $V_C$.
We will call $q_{u,S}$ the \emph{entry vertex} of the blocker.
%$p_{u,S}$ its \emph{forcing vertex}.
See Fig.~\ref{f:blocking} for illustration.

\picture{blocking}{The Blocker $\BLOCK(u,S)$}{0.5}

Suppose we blocked $S$ by $u$ and the only edges coming to $q_{u,S}$
start in~$S$. If Robber is at $u$, then we must have at least one cop
in the set $S$ or on $q_{u,S}$.

We prove the properties of the Blocker in Lemma~\ref{l:blocking}.

\medskip

A way to prevent cops from moving arbitrarily is called \emph{forcing}
and we will now describe it.
Consider $S \subset V_R$ (we will apply this for $S$ consisting of all the ``normal'' positions of Robber)
and $V \subset V_C$.
We say that we \emph{force $V$ by $S$}, when we add the gadget
$\FORCE(S,V)$ in Fig.~\ref{f:forcing}:
it consists of a vertex $f_{S,V}$, together with edges from each vertex in $S$ to it.
Further, for any $v \in V$ we add a directed path $(f_{S,V}, \bar v, v)$.
All of the new vertices $f_{S,V}$ and $\bar v$ are added to the robber-region $V_R$.

\picture{forcing}{The Enforcer gadget $\FORCE(S,V)$.}{0.6}

We describe here shortly the desired behaviour.
We say a vertex $u \in V_C$ is \emph{guarded} if there is a cop that can reach $u$
in at most two moves. The forcing gadget ensures that before each Robber's move,
every vertex of~$V$ is guarded.
Later, we will apply the gadget $\FORCE(S,V)$ for $V = \{F_x\}$ and $V = \{T_x\}$
(and for one more similar pair).
The desired outcome is that the variable cop is forced to stay on $T_x$ or $F_x$, unless
another cop helps to guard these two vertices.
(We could ensure the same behaviour using blocker gadgets, but it would lead to a larger
graph -- one with quadratically many edges -- thus we would obtain weaker result
for the running time of algorithms that solve \GuardPSP.)

The properties of forcing are precisely described and proved in Lemma~\ref{l:forcing}.

\subsection{The Robber Gate to $V_C$}

For every clause $\varphi$ of $F_R$, we construct Robber Gate gadget $\ROBG(\varphi)$.
The goal of this gadget is that it allows Robber to enter $V_C$ if and only if
$\varphi$ is satisfied by the current setting of variables.

\picture{robgate}{The Robber Gate $\ROBG(\varphi)$ for the clause
$\varphi=(\neg x_1 \& x_3 \& x_4 \& \neg x_6 \& \neg x_7)$}{0.5}

\gadget{Robber Gate $\ROBG(\varphi)$}
Consider a clause $\varphi=(\ell_1 \& \dots \& \ell_{12})$ where each $\ell_i$ is a literal.
The Robber Gate $\ROBG(\varphi)$ is the Blocker $\BLOCK(\var{RT}, V)$ with
$$
 V = \{T_x \mid \hbox{some $l_i$ is $\neg x$}\}
    \cup
   \{F_x \mid \hbox{some $l_i$ is $x$}\}
$$

For easier notation, we shall use shorter notation
$z'_\varphi := p_{\var{RT},V}$ and $z_\varphi := q_{\var{RT},V}$.
See Fig.~\ref{f:robgate} for illustration.
The vertices $z'_\varphi$ and $\var{RT}$ belong to $V_R$, $z_\varphi$ belongs to $V_C$.

\prereq{Robber Gate $\ROBG(\varphi)$}
Robber stands on $\var{RT}$ and there is exactly one cop in each $\VAR(x)$, $x\in\varphi$, standing
either on $T_x$ or $F_x$.
No other cop can access $z_\varphi$ in one move.
It is the Robber's turn.

\scenario{Robber Gate $\ROBG(\varphi)$}
Robber can reach $z_\varphi$ (thus winning the game) if and only if $\varphi$ is satisfied
under the current setting of variables (given by the positions of cops on Variable Cells) \emph{and} the
cop-player did not win before.
Otherwise, Robber moves to $\var{CM}$.

We remark that if cop-player has won before (by satisfying his formulas in the formula-game -- this
corresponds to moving one of his pawns to vertex $a''$ described in the next section) then there will be a cop that
can reach every $z_\varphi$ in a single move. This renders the Robber Gate unfunctional.

The properties of the Robber Gate are proved in Lemma~\ref{l:robbergate}.

\subsection{The Cop Gate}

For every clause $\psi$ of $F_C$, we will use the \emph{Cop Gate} gadget $\COPG(\psi)$ (see Fig.~\ref{f:copgate}).
If (and only if) $\psi$ is satisfied, $\COPG(\psi)$ allows cops to win the game. According to the rules, the game
will still continue infinitely long, but Robber will never be able to enter $V_C$. This is achieved by
placing a cop to a vertex, from which he controls all vertices $z_\varphi$ in the Robber Gates.

\gadget{Cop Gate $\COPG(\psi)$}
The Cop Gate $\COPG(\psi)$ contains a directed cycle $(a_\psi,a'_\psi,a'',a'''_\psi)$.
Further, $\COPG(\psi)$ contains edges $(a'', z_\varphi)$ for every clause~$\varphi$ of the Robber's formula~$F_R$.
(These edges are common for all clauses~$\psi$.)
Note, that the vertex $a''$ does not have the subscript~$\psi$ -- this vertex is shared
among all clauses~$\psi$. The reason for this is to keep the size of the constructed graph~$\dir G$ small enough.

Let $\psi=(\ell_1 \& \dots \& \ell_{12})$ where each $\ell_i$ is a literal.
We add 12 blockers to our gadget, one for every literal: \\
If $\ell_i=x$ then we add the blocker $\BLOCK(CT, \{T_x, a_\psi, a''\})$ to $\COPG(\psi)$. \\
If $\ell_i=\neg x$ then we add the blocker $\BLOCK(CT, \{F_x, a_\psi, a''\})$ instead.

The directed 4-cycle belongs to $V_C$, the blocker is as described above.

\picture{copgate}{The Cop Gate $\COPG(\psi)$ for $\psi = x_1 \& \neg x_3 \& \neg y_1 \& y_3 \& \neg y_5$,
($x_i$ are the cop's variables, $y_i$ the Robber's ones). Note that we only add vertices corresponding
to the variables that appear in the clause $\psi$.}{0.9}

\prereq{Cop Gate $\COPG(\psi)$}
There is one cop at the vertex $a_\psi$ (we call him Arnold)
and there is exactly one cop in each $\VAR(x)$, $x\in\psi$, standing on either $T_x$ or $F_x$.
Robber is at the vertex $CT$ and no other cop can access $\COPG(\psi)$ in less than three moves.
It is the cop's turn.

\scenario{Cop Gate $\COPG(\psi)$}
Arnold is able to move to $a''$ (and therefore block all
the entrances $z_\varphi$ forever) without permitting Robber to enter $V_C$
if and only if $\psi$ is satisfied under the current setting of variables (given by the position
of cops in the Variable Cells).

Note that while Arnold will never move voluntarily back from~$a''$ to $a_\psi$,
Robber may force him using the forcing gadget that we will add
in Section~\ref{s:bigpicture} in step~\ref{item:force} of the construction.
Therefore, the vertex $a'''_\psi$ is needed.

The properties of the Cop Gate are proved in Lemma~\ref{l:copgate}.

\subsection{The Commander gadget}\label{s:commander}

When changing variables of $R \cup C$, we have to make sure that at most one
variable is changed at a time. We already prevented all variable cops from moving at all
by means of forcing, see Section~\ref{s:forceblock} and Lemma~\ref{l:forcing}.

Next we describe a gadget that allows one of them to move: the gadget Commander (see Fig.~\ref{f:commander}).

\gadget{Commander $\COM$}
It consists of vertices $\{G_x, GF_x, GT_x \mid x\in R \cup C\}\cup\{\var{HQ}\}$.
For each $x \in R \cup C$ there is directed 2-cycle $(\var{HQ},G_x)$ and directed
paths $(G_x, GF_x, F_x)$ and $(G_x, GT_x, T_x)$.
All of the vertices belong to $V_C$.

\picture{commander}{The Commander gadget $\COM$ including Variable Cells $\VAR(x_1),\dots,\VAR(x_k)$.}{0.7}

There is one cop, Commander, whose prescribed starting position is the vertex $\var{HQ}$.
In Section~\ref{s:bigpicture} we will add some extra edges that will force Commander to
stay at $\var{HQ}$ most of the time.

The desired function of $\COM$ is as follows:
in order to move a variable cop from $T_x$ to $F_x$ or vice versa, Commander
moves to $G_x$ to temporarily guard vertices $T_x$ and $F_x$.
This will happen in two stages of the game: when Robber sets variables, he will pick a particular
$x \in R$ and make Commander move to~$G_x$.
When the cop-player moves, Robber will only ensure that Commander moves to a $G_x$ with $x \in C$.

\gadget{Robber Switch $\ROBS(y)$}
Let $y\in R$. We describe a gadget that allows Robber to change
the value of the variable $y$, i.e., to force the variable cop in $\VAR(y)$
to move from $T_y$ to $F_y$ or vice versa.

The Switch of Robber's variable $y$ ($\ROBS(y)$) consists of
a directed path $(\var{RM},Sw_y,\var{RT})$, edges $(\var{RM}, \var{HQ})$, $(\var{RT}, \var{HQ})$, $(Sw_y,G_y)$,
$(\var{RM},\var{RT})$
and a Bloc\-ker $\BLOCK(Sw_y, \{\var{FT}_y, \var{TF}_y\})$
(see Fig.~\ref{f:switch}).

The new vertices $Sw_y$ belong to $V_R$, the Blocker is described earlier.

\picture{switch}{The Robber Switch gadget $\ROBS(y)$ together with the
Variable Cell $\VAR(y)$ and a part of the Commander gadget $\COM$.}{0.9}

\prereq{Robber Switch $\ROBS(y)$}
Robber is at $\var{RM}$, the cop in $\VAR(y)$ is
either on $T_y$ or $F_y$, Commander is at $\var{HQ}$ and no other cop can access any vertex of $\VAR(y)$.
It is the Robber's turn.

\scenario{Robber Switch $\ROBS(y)$}
By entering the vertex $Sw_y$
Robber forces Commander to move to $G_y$, and the variable cop to
enter $\var{FT}_y$ or $\var{TF}_y$.
Robber then enters the vertex $\var{RT}$, the variable cop finishes his
move to $T_y$ (or $F_y$) and at the same time Commander returns to $\var{HQ}$.
If Robber decides not to change any variable, he may
go directly to $\var{RT}$; in such case the cops do not move in their next turn.

\smallskip

The properties of the Robber Switch are proved in Lemma~\ref{l:robs}.

\smallskip

\gadget{Cop Switch $\COPS$}

The $\COPS$ gadget consists of a Blocker $\BLOCK(\var{CM},\{G_x \mid x \in C\}$.

\prereq{Cop Switch $\COPS$}
Robber is on $\var{CM}$, all variable cops are at $T_x$ or $F_x$ and Commander at $\var{HQ}$.
It is the cop's turn.

\scenario{$\COPS$}
The cop-player decides which variable $x \in C$ to change.
The Commander moves to $G_x$, and the variable cop in $\VAR(x)$
moves to $\var{TF}_x$ or $\var{FT}_x$. Next, Robber moves to $\var{RW}$.
Commander moves back to $\var{HQ}$ and the cop in $\VAR(x)$
finishes his move to $F_x$ or $T_x$.
Finally, Robber moves to $CT$.
Alternatively, if the cop-player does not want to change any variable,
all variable-cops stay put, and just Commander moves to some $G_x$ and
back to $\var{HQ}$.

\subsection{The big picture: putting the gadgets together}\label{s:bigpicture}

Let us consider an instance of the formula game $\F=(\tau, F_R(C,R), F_C(C,R), \alpha)$.
We now proceed by putting all gadgets together in order to construct the instance
$\G=(\dir G,V_C,c, S, r)$ of \GuardPSP.

When different gadgets contain a vertex with the same name, this means that this vertex
is shared by these gadgets (this happens often for vertices $\var{RM}$, $\var{RT}$, $\var{CM}$, $CT$, $\var{HQ}$, $a''$ but also for others).
To make the construction clearer, we define the following sets:
\begin{align*}
  \calT &= \{ T_x \mid x \in R \cup C\} \\
  \calF &= \{ F_x \mid x \in R \cup C\} \\
  \calA &= \{ a_\psi \mid \psi \in F_C \} \\
\end{align*}

The order of construction steps is as follows.

\begin{enumerate}

\item\label{item:basiccycle} Construct the Basic Cycle.

\item\label{item:varcell} For every variable $x\in C\cup R$ construct the Variable Cell gadget $\VAR(x)$.

\item\label{item:robbergate} For every clause $\varphi$ of $F_R$ construct the Robber Gate gadget $\ROBG(\varphi)$.

\item\label{item:copgate} For every clause $\psi$ of $F_C$ construct the Cop Gate gadget $\COPG(\psi)$.

\item\label{item:commander} Construct the Commander gadget $\COM$.

\item\label{item:robs} For every variable $y\in R$ construct the Robber Switch $\ROBS(y)$.

\item\label{item:cops} Construct the Cop Switch $\COPS$.

\item\label{item:force}
  Let $S_1 = \{ \var{RM}, \var{RT}, \var{CM}, \var{RW}\} \cup \{ Sw_y \mid y \in R \}$ and
  $S_2 = S_1 \cup \{ CT \}$. Add to the graph gadgets
  $\FORCE(S_2, \calT)$, $\FORCE(S_2, \calF)$,
  $\FORCE(S_1, \calA)$, and $\FORCE(S_1, \{a''\})$.
  We call the vertices $f_{S_2, \calT}$, $f_{S_2, \calF}$, $f_{S_1, \calA}$, and
  $f_{S_1, \{a''\}}$ of these gadgets the \emph{forcing vertices}.

\item\label{item:edges} Add directed edges to $\var{HQ}$ from $\var{RM}$, $\var{RT}$, $\var{RW}$, $CT$.
\end{enumerate}

%\includegraphics{graph.jpg}
%\includegraphics[page=2,scale=0.5]{graphxoj.pdf}

%Note that the steps \ref{item:enforfirst}--\ref{item:enforlast} must be done as the last ones,
%because they depend on the previous construction elements.

\medskip

The prescribed starting position $r$ of Robber is the vertex $\var{RM}$.
We define the starting positions $S$ of the cops as follows:
\begin{itemize}
\item For each variable $x\in C\cup R$, the corresponding variable cop
starts at the vertex $T_x$ if $\alpha(x)$ is true and
at the vertex $F_x$ if $\alpha(x)$ is false.
\item Each $\psi$-Arnold starts at the corresponding vertex $a_\psi$.
\item The cop Commander starts at the vertex $\var{HQ}$.
\end{itemize}

The number $c$ of cops is thus set as $c=|C\cup R|+|F_C|+1$, where
$|F_C|$ denotes the number of clauses in~$F_C$.

\medskip

We also briefly show that the construction of $\G$ based on $\F$ can be done in log-space.
A log-space reduction is a reduction computable by a deterministic Turing machine
(with read-only input tape, a working tape, and output tape) using logarithmic space.

Conceptually, this means that while producing encoding of the instance $\G$ to the output tape
the reducing Turing machine can keep (stored on the work tape)
a constant number of pointers into the input tape
along with a constant number of integers holding at most polynomially large values,
and a logarithmic number of fixed size integers.

\begin{lemma}\label{l:logspace}
Let $\F$ be an instance of the formula game and let $\G$
be an instance of \GuardPSP\ constructed via the above construction from $\F$.
Then the reducing Turing machine is log-space bounded and $|\G|=O(|\F|\log |F|)$
where the instance sizes are measured in bits.
\end{lemma}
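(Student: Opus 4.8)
\section*{Proof proposal for Lemma~\ref{l:logspace}}

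The plan is to separate the two claims---the log-space bound on the reduction and the size bound $|\G| = O(|\F|\log|\F|)$---and to treat the size bound first, since the log-space bound essentially falls out of the same bookkeeping. Throughout, write $n = |C \cup R|$ for the number of variables and $m_R$, $m_C$ for the number of clauses of $F_R$, $F_C$. Because each literal of a clause must name one of $n$ variables, encoding the two 12-DNF formulas costs $\Theta((m_R+m_C)\log n)$ bits and the assignment $\alpha$ costs $\Theta(n)$ bits, so that $|\F| = \Theta(n + (m_R+m_C)\log n)$; in particular each of $n$, $m_R$, $m_C$ is $O(|\F|)$, a fact I would use repeatedly.

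First I would bound the number of vertices and edges of $\dir G$ by walking through the construction steps~\ref{item:basiccycle}--\ref{item:edges}. The Basic Cycle is $O(1)$. Each Variable Cell, each Robber Switch $\ROBS(y)$, and the Commander $\COM$ contribute $O(1)$ vertices and edges per variable, giving $O(n)$ in total. Each Robber Gate $\ROBG(\varphi)$ is a single Blocker over a set of at most $12$ vertices, hence $O(1)$ per clause, i.e.\ $O(m_R)$. Each Cop Gate $\COPG(\psi)$ is a $4$-cycle together with $12$ Blockers over $3$-element sets, again $O(1)$ per clause and $O(m_C)$ in total. The only places where a naive reading might fear a superlinear blowup are (i) the edges $(a'', z_\varphi)$, which are \emph{shared} across all clauses $\psi$ and therefore number only $O(m_R)$ globally rather than $O(m_R m_C)$; and (ii) the Cop Switch $\COPS$ together with the forcing gadgets, which are deliberately introduced in place of quadratically many Blockers (as the text notes) and contribute only $O(n + m_C)$ edges. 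Summing, $|V(\dir G)| + |E(\dir G)| = O(n + m_R + m_C) = O(|\F|)$, and the cop count is $c = n + m_C + 1 = O(|\F|)$.

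With the combinatorial size in hand, the bit-size bound is routine. I would label every vertex by a short structured tuple: a fixed identifier for the $O(1)$ shared vertices ($\var{RM}, \var{RT}, \var{CM}, \var{RW}, CT, \var{HQ}, a''$ and the forcing vertices), and otherwise a (type, variable-or-clause-index, optional literal position) triple. Each such label fits in $O(\log|V(\dir G)|) = O(\log|\F|)$ bits. Encoding the edge list then costs $O(|E(\dir G)|\log|\F|)$ bits, the starting placement $S$ costs $O(c\log|\F|)$ bits, and $V_C$, $r$, $c$ cost no more; altogether $|\G| = O((n+m_R+m_C)\log|\F|) = O(|\F|\log|\F|)$, as claimed.

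Finally, for the log-space claim I would argue that the reduction is \emph{uniform and local}: the output is produced gadget by gadget, and each edge or vertex emitted depends only on a constant number of currently-indexed variables and clauses. The machine therefore needs only a constant number of counters and input pointers ranging over the variables and clauses, each of $O(\log|\F|)$ bits, plus scratch space to assemble a single vertex label. The one point requiring care is that shared vertices must receive \emph{identical} labels at every reference (for instance $a''$ appears in every $\COPG(\psi)$ and in the gadget $\FORCE(S_1,\{a''\})$); this is handled by making labels a deterministic function of the structured tuple above, so that no growing lookup table---and hence no superlogarithmic storage---is ever needed. I expect the main obstacle to be precisely this verification that no gadget secretly forces quadratic size or a stored table: once the sharing of $a''$ and the replacement of Blockers by forcing gadgets are properly accounted for, the $\log$ factor in the final bound comes solely from vertex labels and both parts of the lemma follow.
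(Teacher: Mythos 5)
Your proposal is correct and follows essentially the same route as the paper's own proof: walk through the construction steps to show the graph has $O(p+f)$ vertices and edges (with the key observations that the $(a'',z_\varphi)$ edges are shared across all $\psi$ and that the forcing gadgets avoid a quadratic blowup), then pay an $O(\log|\F|)$ factor for vertex labels to get $|\G| = O(|\F|\log|\F|)$, and argue log-space via a constant number of logarithmic-size counters iterating over variables and clauses. Your explicit treatment of the deterministic labeling of shared vertices (so no lookup table is needed) is a detail the paper leaves implicit, but it is a refinement of, not a departure from, the same argument.
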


\begin{proof}
In order to show the log-space reducibility of our construction of $\G$,
we need to observe that the whole construction can be performed by an algorithm
reading the encoding of the formula game $\F$ and using only constant number
of work variables holding polynomially large values.

We go through the construction and briefly discuss the space needed for construction.

\begin{itemize}
\item Step \ref{item:basiccycle} of the construction produces a fixed-size gadget.
\item Steps \ref{item:varcell}, \ref{item:commander}, and \ref{item:robs}
can be realised by an iteration over the set of variables,
each iteration producing a fixed-size gadget. Clearly, a constant number of log-space integers suffices.
\item Steps \ref{item:robbergate} and \ref{item:copgate} can be realised by two
nested iterations over the set of clauses and the set of variables.
Again, a constant number of log-space integers suffices.
\item Step \ref{item:force}
can be done by several iterations (over the set of variables and over the set of clauses).
Important thing to note is that the gadget $\FORCE(S,V)$ has
size (the number of vertices and edges) $O(|S|+|V|)$.
\end{itemize}
The output of the cops and Robber starting positions can be done by an iteration
over the set of clauses and variables.
Therefore, only a constant number of log-space integers is needed during the reduction.

Let $f$ be the number of clauses of $\F$ and let $p$ be the number of variables of~$\F$.
Then the number of bits of $\F$ is $\Omega(f\log p + p) = \Omega(f+p)$ as we need $\log p$ bits to encode
the variable identifiers, each clause contains at most 12 literals and the initial
assignment of variables takes $p$ bits.

Next, we estimate the size of the instance $\G=(\dir G,V_C,c,S,r)$, based on our construction,
starting with the number of vertices and edges of $\dir G$.

Each $\VAR(y)$ has a constant size, all Variable Cells thus have total size $O(p)$.
Each $\ROBG(\varphi)$ has a constant size, all Robber Gates thus have total size $O(f)$.

The Cop Gate gadgets contain a shared part of size $O(f+p)$
and the non-shared part of each $\COPG(\psi)$ has a constant size.
Therefore, all Cop Gate gadgets have total size $O(f+p)$.
The size of the Commander gadget is clearly $O(p)$.

Each $\ROBS(y)$ has a constant size, so the total size of all Robber Switches is $O(p)$.
The size of $\COPS$ is clearly $O(p)$.
In the last nontrivial step, step~\ref{item:force}, $O(p+f)$ vertices and edges are added to $\dir G$.

As $|V(\dir G)|+|E(\dir G)| \le O(p+f)$, we need
$$
O((p+f)\log(p+f)) = O(|\F| \log(|\F|) )
$$
bits to encode $\dir G$.
\qed\end{proof}

Note that we may get a linear bound on the size of $\G$, if we assume fairly reasonable condition on the formula game:
all variables are used and no clause repeats within a formula.

\subsection{Proof of correctness}

Let us start by a simple observation.
While the rules allow Robber to pass (skip a move),
we may assume that he never uses it.
For if Robber does not move, the cops may do the same and repeating
this indefinitely means a win for cops.
From now on, we shall assume Robber never passes.

Another assumption, that will be somewhat harder to prove:
%\begin{center}
\begin{adjustwidth}{1cm}{1cm}
Assumption A: the cop-player restricts to strategy where either every $\psi$-Arnold is at his starting vertex~$a_\psi$,
or exactly one of them moves to~$a'_\psi$,~$a''$ (when Robber entered vertex~$CT$), and possibly back to $a'''_\psi$ and $a_\psi$.
\end{adjustwidth}
%\end{center}
In Theorem~\ref{t:prescribedgame} we will show that this restriction does not change the outcome of the game,
it will make it easier to analyze, though.

Before getting to the individual lemmas, we describe the main idea: if the robber-player
succeeds in satisfying formula~$F_R$, Robber will be able to enter the cop-region by means of Robber Gate.
If the cop-player will be faster with satisfying his formula~$F_C$, he will be able to send one
Arnold to~$a''$. The resulting state of game will be cop-win -- there is an easy strategy
(described in Lemma~\ref{l:arnoldwin}) that will make sure Robber will never enter the cop-region.

Recall that we call a vertex $v \in V_C$ \emph{guarded} if there is a cop that can reach $v$
in at most two moves.
Further, a \emph{set} $V \subseteq V_C$ is \emph{guarded} if every vertex of~$V$ is guarded
and each one is guarded by a different cop. Formally, we require that there is
a set $U$, $|U|=|V|$, of cops and a 1-1 mapping $m$ between $U$ and $V$
such that for each $c\in U$ the vertex $m(c)$ is guarded by the cop~$c$

\begin{lemma}\label{l:forcing}
Let $S$, $V$ be sets such that the gadget $\FORCE(S,V)$ is part of $\dir G$.
Let us consider the game state where Robber is in $S$ and it is Robber's turn.
\begin{enumerate}
\item
Let there be an unguarded vertex of~$V$. Then this is a robber-win state.
\item
If the set $V$ is guarded and Robber enters the vertex $f_{S,V}$, the game state will be a cop-win.
\item
If the set~$V$ is not guarded and there is no vertex $x \in V_C$ that has more than one outneighbor
in~$V$ then this is a robber-win state.
\end{enumerate}
\end{lemma}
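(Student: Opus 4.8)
The three parts all hinge on one feature of the geometry of $\FORCE(S,V)$: from any vertex $s\in S$ the only way for Robber to threaten $V_C$ through this gadget is along the directed path $s\to f_{S,V}\to \bar v\to v$ for some $v\in V$, where $f_{S,V},\bar v\in V_R$ and $v\in V_C$. Thus Robber reaches a vertex of $V$ only on his third move, and in the meantime the cops get exactly two moves to react. This is precisely why the right notion is that of a \emph{guarded} vertex (one reachable by a cop in at most two moves), and I would organise the whole proof around this path.

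For part~1 I would argue directly. Let $v\in V$ be unguarded and let Robber walk $s\to f_{S,V}\to \bar v\to v$. Since $v$ is unguarded, no cop can occupy $v$ after two cop moves, so when Robber makes his third move the vertex $v$ is empty and Robber enters $V_C$. As this does not depend on what the cops do, the state is a robber-win.

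For part~2, assume $V$ is guarded, witnessed by a set $U$ of cops and a bijection $m\colon U\to V$ with each cop able to reach its image in at most two moves. Once Robber steps onto $f_{S,V}$ I send every cop $m^{-1}(v)$ along a shortest path towards $v$; because the targets are distinct, after the two cop moves interleaved with Robber's two remaining steps every vertex of $V$ is simultaneously occupied. Hence whichever $\bar v$ Robber chooses, the vertex $v$ is already blocked, and since the only out-neighbour of $\bar v$ is $v$, Robber is trapped on $\bar v$ forever while one cop sits on $v$ indefinitely; the state is a cop-win.

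The heart of the lemma is part~3, which I would phrase through the defect form of Hall's theorem: there is no system of distinct representatives assigning to each $v\in V$ a distinct guarding cop if and only if some $V'\subseteq V$ has the property that the set $C(V')$ of cops able to reach a vertex of $V'$ in at most two moves satisfies $|C(V')|<|V'|$. (A single unguarded vertex is the case $V'=\{v\}$, $C(V')=\emptyset$, which recovers part~1.) Robber again moves to $f_{S,V}$; the cops make one ``blind'' move, not yet knowing which $\bar v$ Robber will pick; Robber then commits to some $\bar v$ with $v\in V'$. To block this $v$ on the one cop move that remains, a cop must finish its blind move on $v$ or on an in-neighbour of $v$, and in either case it reaches $v$ in at most two moves from its original position, so it lies in $C(V')$. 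The obstacle, and the place where the hypothesis that no vertex of $V_C$ has more than one out-neighbour in $V$ is used, is to show that each such cop can, after its blind move, be poised to block at most one vertex of $V'$: a cop sitting at a vertex $w$ threatens at most $(\{w\}\cup N^{+}(w))\cap V'$, and with at most one out-neighbour in $V$ (and, in all our applications, with the vertices of $V$ themselves having no out-neighbour in $V$) this is a single vertex. Consequently at most $|C(V')|<|V'|$ vertices of $V'$ are blockable, so some $v^{*}\in V'$ is unblockable, and Robber enters $v^{*}$ and wins. I expect this counting step, together with the exact bookkeeping of who moves when, to be the only delicate point of the proof.
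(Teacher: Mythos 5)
Your parts 1 and 2 are correct and coincide with the paper's own proof: the unguarded vertex is reached along the path $(f_{S,V},\bar v,v)$ with the cops two moves too slow, and in the guarded case each cop walks to its own target, trapping Robber on $f_{S,V}$ or some $\bar v$ forever. For part 3 you take a genuinely different, though logically dual, route. The paper argues by contraposition: assuming the cops can survive Robber's move to $f_{S,V}$, it defines $d(v)$ to be a cop able to reach $v$ in one step after the cops' reply, observes that the out-degree hypothesis makes $d$ injective, and takes the inverse of $d$ as the witness that $V$ was guarded all along. You instead argue forward: non-guardedness is unfolded via the defect form of Hall's theorem into a deficient set $V'$ with $|C(V')|<|V'|$, and the same counting fact --- each cop can threaten at most one vertex of $V'$ after its blind move --- leaves an unblockable $v^*\in V'$ for Robber. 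The two arguments rest on exactly the same injectivity/counting observation; the paper's version is more elementary (no Hall's theorem, and no need to exhibit a deficient set), while yours has the merit of making Robber's winning move explicit. A point in your favour: you notice the corner case that both arguments must address --- a cop standing \emph{on} a vertex $w\in V$ that has an out-neighbour in $V$ could threaten two vertices, since the stated hypothesis bounds only out-neighbours and the vertex itself is reachable by standing still --- and you patch it by noting that in all applications ($\calT$, $\calF$, $\calA$, $\{a''\}$) no vertex of $V$ has an out-neighbour in $V$. The paper's assertion that ``$d$ is injective'' silently glosses over this same case, so your proof is, if anything, slightly more careful than the original.
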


\begin{proof}
Suppose $v$ is unguarded.
Then Robber moves along the path $(f_{S,V}, \bar v, v)$
and no cop can stop him.

On the other hand, suppose the set~$V$ is guarded and mapping~$m$ is defined as above.
If Robber decides to move to $f_{S,V}$, than each cop~$c$ guarding vertex $m(c) \in V$ moves towards it
(or stays in place, if they already are at it).
Thus, the cops can reach all vertices of~$V$ before Robber, and Robber
is stuck in~$f_{S,V}$ or some $\bar v$ for the rest of the game.

For the third part, suppose that Robber moves to $f_{S,V}$ and the cops can somehow prevent him
from winning. Then for every $v \in V$ the cops have a response to Robber's move to $\bar v$, namely,
there is a cop that can move to~$v$ in one step, denote him by $d(v)$. Due to the extra condition
in the third part, mapping $d$ is injective, its inverse is the required mapping~$m$.
\qed\end{proof}

We now define the notion of \emph{normal positions} of certain pawns.
\begin{definition}\label{def:normal}
We say that
\begin{itemize}
  \item Robber is in normal position if he is on $S_2$ (which was previously defined as
    $\{ \var{RM}, \var{RT}, \var{CM}, CT, \var{RW}\} \cup \{ Sw_y \mid y \in R \}$),
  \item Commander is in normal position if he is on $\var{HQ}$ or on some $G_x$,
  \item variable cops are in normal position if they are in some $\VAR(x)$,
  \item $\psi$-Arnold is in normal position if he is at $a_\psi$ or $a''$.
\end{itemize}
\end{definition}

If one of the players leaves these positions, the game will
be decided quickly. This is obvious for Robber, for if he is not in the
normal position, he is no longer in the strongly connected part of the graph.
So, he either reaches $V_C$ within
two moves, or he will be unable to move for the rest of the game.
In the following lemmas we deal with the rest of the pawns.

\begin{lemma}\label{l:HQ}
Consider a game state where Robber is in $S_2$ (see Definition~\ref{def:normal})
and it is the Robber's turn.
If Commander is not on $\var{HQ}$ or on some~$G_x$, then this is a robber-win state.
\end{lemma}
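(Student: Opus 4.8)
The plan is to show that the hypothesis forces the vertex $\var{HQ}$ to be permanently unguarded, and then to exhibit a short winning walk for Robber that enters $V_C$ at $\var{HQ}$. The whole argument rests on the edges added in step~\ref{item:edges} of the construction, which turn ``$\var{HQ}$ unguarded'' into an immediate threat.

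First I would record two structural facts. (i) Inside $V_C$ the only in-neighbours of $\var{HQ}$ are the vertices $G_x$, and the only in-neighbour of each $G_x$ is $\var{HQ}$; the edge $Sw_y\to G_y$ starts in $V_R$, so cops cannot traverse it. Hence the set $\{\var{HQ}\}\cup\{G_x\mid x\in R\cup C\}$ has no cop-usable edge entering it from outside. Since the Commander is the only cop placed inside this set initially (variable cops are confined to their Variable Cells and the Arnolds to their Cop Gates, neither of which touches $\var{HQ}$ or any $G_x$), the Commander is the unique cop that can ever stand on $\var{HQ}$; and once it leaves the set through $G_x\to GT_x$ or $G_x\to GF_x$ it can never return, as re-entering would require reaching some $G_x$, whose only in-edge comes from $\var{HQ}$. (ii) By step~\ref{item:edges}, $\var{HQ}\in V_C$ and there are edges $\var{RM}\to\var{HQ}$, $\var{RT}\to\var{HQ}$, $\var{RW}\to\var{HQ}$, $CT\to\var{HQ}$.

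Combining these with the hypothesis that the Commander lies neither on $\var{HQ}$ nor on any $G_x$, the vertex $\var{HQ}$ carries no cop now, and the Commander cannot reach it in a single move (that would require currently sitting on some $G_x$). Therefore $\var{HQ}$ remains cop-free after any one cop move. Because $\var{RM},\var{RT},\var{CM},\var{RW},CT$ and all the $Sw_y$ lie in $V_R$, the cops can neither occupy $\var{HQ}$ nor obstruct Robber's movement among these vertices, so they cannot prevent what follows.

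It then remains to give Robber's winning walk, split by his current vertex of $S_2$. If he stands on one of $\var{RM},\var{RT},\var{RW},CT$ he moves straight to the unguarded $\var{HQ}\in V_C$ and wins at once; if he stands on $\var{CM}$ he first takes $\var{CM}\to\var{RW}$ along the Basic Cycle, and if he stands on some $Sw_y$ he first takes $Sw_y\to\var{RT}$ along the Robber Switch path, in either case followed by the move into $\var{HQ}$, which the intervening cop move cannot defend. In every case Robber enters $V_C$ within two moves, so the state is robber-win. The only delicate point is fact~(i): I would check carefully that no blocker or forcing edge offers an unexpected route for some cop into $\{\var{HQ}\}\cup\{G_x\}$, i.e.\ that this set is genuinely closed under cop-usable in-edges, since this closure is precisely what guarantees $\var{HQ}$ stays unguarded forever.
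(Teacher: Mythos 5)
Your proof is correct and follows essentially the same approach as the paper's (much terser) argument: the set $\{\var{HQ}\}\cup\{G_x\}$ is closed under cop-usable in-edges, so a Commander outside it can never return, and Robber then walks from his position in $S_2$ to the permanently unguarded $\var{HQ}$ via the edges of step~\ref{item:edges}. Your version merely makes explicit the closure check and the short walks (the paper routes Robber through $\var{RM}$, you go more directly), which is a presentational rather than a substantive difference.
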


\begin{proof}
If Commander is elsewhere, he will not be able to reach $\var{HQ}$ again
(by the construction). If Robber is in~$S_2$, he will be able to get to $\var{RM}$
and from there to enter $\var{HQ}$.
\qed\end{proof}

\begin{lemma}\label{l:normalposvarcops}
Consider a game state where Robber is in $S_2$ (see Definition~\ref{def:normal})
and it is the Robber's turn.
If there is an $x \in R \cup C$ so that there is no cop on $F_x$, $T_x$, or $G_x$,
then this is a robber-win state. Otherwise, if Robber enters vertex
$f_{S_2,\calF}$ or $f_{S_2,\calT}$, he loses.
\end{lemma}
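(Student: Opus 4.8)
The plan is to derive both assertions from Lemma~\ref{l:forcing}, applied to the two enforcer gadgets $\FORCE(S_2,\calT)$ and $\FORCE(S_2,\calF)$ that were attached in step~\ref{item:force}. Since Robber sits in $S_2$ and it is his turn, the hypotheses of Lemma~\ref{l:forcing} are met for $V=\calT$ and for $V=\calF$, so everything reduces to deciding whether the sets $\calT$ and $\calF$ are guarded. The first thing I would record is a purely local computation: following the directed cycle $(T_x,\var{FT}_x,F_x,\var{TF}_x)$ of $\VAR(x)$ and the Commander paths $(G_x,GT_x,T_x)$ and $(G_x,GF_x,F_x)$, the only vertices of $V_C$ from which $T_x$ is reachable in at most two moves are $T_x,\var{TF}_x,F_x,GT_x,G_x$, and symmetrically $F_x$ is guardable only from $F_x,\var{FT}_x,T_x,GF_x,G_x$. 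I would also check the mild side condition of Lemma~\ref{l:forcing}(3): no vertex of $V_C$ has two outneighbours inside $\calT$ (resp.\ inside $\calF$), since every predecessor of a cell or of a Commander path points to a single $T_x$ (resp.\ $F_x$).

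For the second (``otherwise'') assertion I would argue the easy direction. Assuming that for every $x$ some cop occupies $T_x$, $F_x$ or $G_x$, each such cop in fact guards both $T_x$ and $F_x$: from $T_x$ one reaches $F_x$ in two moves via $\var{FT}_x$, from $F_x$ one reaches $T_x$ via $\var{TF}_x$, and from $G_x$ one reaches both via $GT_x$ and $GF_x$. I would then exhibit the required $1$-$1$ mappings by assigning to each $T_x$ (resp.\ $F_x$) the cop witnessing $x$. These are distinct across variables, because a cop on $T_x$ or $F_x$ lives in cell $x$, while the Commander is the only cop that can sit on any $G_x$, so at most one variable is served by him. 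Hence $\calT$ and $\calF$ are both guarded, and Lemma~\ref{l:forcing}(2) says that Robber loses the moment he enters $f_{S_2,\calT}$ or $f_{S_2,\calF}$.

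The first assertion is the delicate one and is where I expect the real work. Given the variable $x$ with no cop on $T_x$, $F_x$ or $G_x$, I would first invoke Lemma~\ref{l:HQ}: if Commander is not on $\var{HQ}$ or some $G_y$ we already have a robber-win, so I may assume he is, and by the premise $G_x$ is empty, so Commander is neither in cell $x$ nor on $GT_x,GF_x,G_x$. The crux is then to show that the only cop that can guard $T_x$ or $F_x$ is the variable cop $v_x$ of $\VAR(x)$. For this I would use the directed structure: the only edges entering the vertex set of $\VAR(x)$ from outside are $GT_x\to T_x$ and $GF_x\to F_x$, both traversable solely by the Commander, while every edge leaving the cell lands in a blocker entry vertex that is a sink for cops; consequently no cop other than $v_x$ and the (now absent) Commander can occupy any guarding vertex of $T_x$ or $F_x$. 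Combining this with the local computation, $T_x$ is guardable only from $\var{TF}_x$ and $F_x$ only from $\var{FT}_x$, and the single cop $v_x$ cannot occupy both; if $v_x$ has left the cell it guards neither. Thus at least one of $T_x,F_x$ is unguarded, and Lemma~\ref{l:forcing}(1) for the corresponding enforcer gives a robber-win. The main obstacle, and the step I would be most careful about, is exactly this exclusion of stray cops from cell $x$: ruling out two cops in the cell (which is what would make both $\var{TF}_x$ and $\var{FT}_x$ occupied) hinges on the fact that any second cop must have entered through $GT_x$ or $GF_x$ and hence be the Commander, contradicting the standing assumption furnished by Lemma~\ref{l:HQ}.
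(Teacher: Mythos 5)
Your proof is correct and takes essentially the same route as the paper's: pin the Commander via Lemma~\ref{l:HQ}, deduce that at most one cop can occupy each $\VAR(x)$ and that no cop sits on $GT_x$ or $GF_x$, then conclude via parts 1 and 2 of Lemma~\ref{l:forcing}. Your write-up merely makes explicit the structural facts (blocker entries are sinks, the Commander gadget is enterable only by the Commander) that the paper compresses into a single ``Consequently,'' and your check of the side condition of Lemma~\ref{l:forcing}(3) is superfluous since part 1 already suffices.
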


\begin{proof}
By Lemma~\ref{l:HQ}, Commander must stay on~$\var{HQ}$ or some~$G_x$.
Consequently, there is at most one cop in each $\VAR(x)$ and no cop is
at vertices $GF_x$, $GT_x$. Thus, to guard both $T_x$ and $F_x$, we need a cop
at either $G_x$, $T_x$, or $F_x$ (Lemma~\ref{l:forcing}).
The second statement follows again from Lemma~\ref{l:forcing}, as
no cop will be guarding two of the vertices $T_x$, neither two of the vertices $F_x$.
\qed\end{proof}

\begin{lemma}\label{l:normalposarnolds}
\begin{enumerate}
\item Consider a game state where Robber is in~$S_2$ (see Definition~\ref{def:normal})
and it is the Robber's turn.
Then all Arnolds are in $X=\bigcup_\psi \{a_\psi, a'_\psi, a'''_\psi\} \cup \{a''\}$,
or it is a robber-win state.
%If there is a $\psi \in F_C$ so that there is no cop on $a_\psi$, $a''$ or $a'''_\psi$, then it is a robber-win state.
\item If every $\psi$-Arnold is on $a_\psi$, with possible exception of one,
      who is at $a''$, then Robber may not enter vertices $f_{S_1,\calA}$, $f_{S_1,\{a''\}}$.
\item If Robber is in~$S_1$, some Arnold in~$a'_\psi$ and it is the Robber's turn,
  then it is a robber-win state.
\end{enumerate}
\end{lemma}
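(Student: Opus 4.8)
The plan is to derive all three parts from the two forcing gadgets $\FORCE(S_1,\calA)$ and $\FORCE(S_1,\{a''\})$ together with Lemma~\ref{l:forcing}, via a pigeonhole count on the Arnolds. The single structural fact driving everything is that the only cops that can reach a vertex $a_\psi\in\calA$ within two moves are Arnolds, and a $\psi$-Arnold guards some $a_{\psi'}$ only from the positions $a_\psi$, $a'''_\psi$, or $a''$ (the set of vertices from which an $\calA$-vertex is reachable in at most two steps is $\{a_\psi,a'''_\psi,a''\}$, all lying on the Cop-Gate cycles). I would first record two auxiliary facts: no variable cop and no Commander has any edge into the Cop-Gate cycle, so only Arnolds can ever guard $\calA$; and no vertex of $V_C$ has two out-neighbours in $\calA$ (the only in-edge of $a_\psi$ is $a'''_\psi\to a_\psi$), so part~3 of Lemma~\ref{l:forcing} applies to $\FORCE(S_1,\calA)$ whenever $\calA$ fails to be guarded as a set.

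For part~1 I would observe that every edge leaving $X$ ends in a sink: the only out-edges of the cycle vertices that do not stay in $X$ are $a''\to z_\varphi$ and $a_\psi,a''\to q_{CT,\cdot}$, and both the Robber-Gate entries $z_\varphi$ and the Cop-Gate blocker entries $q_{CT,\cdot}$ have no out-edges. Since Arnolds start inside $X$, an Arnold outside $X$ must sit on such a sink and can never move again, so it can never guard any $a_\psi$. Then at most $|F_C|-1$ Arnolds remain to guard the $|F_C|$ vertices of $\calA$, so $\calA$ cannot be guarded as a set. Robber reaches a vertex of $S_1$ (he is already there, or moves $CT\to\var{RM}$ in one step, during which the stuck Arnold stays stuck), and part~3 of Lemma~\ref{l:forcing} then yields a robber-win.

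For part~2 I would exhibit explicit guardings. The set $\{a''\}$ is guarded because any Arnold on $a_\psi$ reaches $a''$ along $a_\psi\to a'_\psi\to a''$, and the possible exceptional Arnold already sits on $a''$. For $\calA$ I give a $1$-$1$ map: each $\psi$-Arnold resting on $a_\psi$ guards $a_\psi$ at distance $0$, and if one $\psi_0$-Arnold is on $a''$ it guards $a_{\psi_0}$ along $a''\to a'''_{\psi_0}\to a_{\psi_0}$. All $|F_C|$ Arnolds are thereby assigned to distinct vertices, so $\calA$ is guarded as a set (the two gadgets use separate mappings, so reusing the $a''$-Arnold causes no conflict). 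Part~2 of Lemma~\ref{l:forcing} then shows that entering either forcing vertex is a cop-win, i.e.\ Robber must not enter them.

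Part~3 is the same counting as part~1 with a different spoiled Arnold: the set of vertices reachable from $a'_\psi$ in at most two moves is $\{a'_\psi,a'',a'''_\psi\}$, which is disjoint from $\calA$, so a $\psi$-Arnold on $a'_\psi$ cannot guard any $a_{\psi'}$. Hence again at most $|F_C|-1$ Arnolds can guard the $|F_C|$-element set $\calA$, which is therefore not guarded; since Robber is already in $S_1$ with the move, part~3 of Lemma~\ref{l:forcing} applies directly. I expect the main obstacle throughout to be the distance bookkeeping inside the shared Cop-Gate cycle $(a_\psi,a'_\psi,a'',a'''_\psi)$ — in particular checking that $a_\psi$ is unreachable within two steps both from $a'_\psi$ and from every sink, and that the sharing of $a''$ across clauses creates no unexpected extra guard — since this is exactly what makes the pigeonhole count tight.
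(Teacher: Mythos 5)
Your proof is correct and follows essentially the same route as the paper: both reduce all three parts to Lemma~\ref{l:forcing} applied to the gadgets $\FORCE(S_1,\calA)$ and $\FORCE(S_1,\{a''\})$, using that only Arnolds can guard $\calA$ and a pigeonhole count once one Arnold is wasted (stuck at a sink outside~$X$, or stranded at~$a'_\psi$). Your write-up is in fact more explicit than the paper's two-line proof --- e.g.\ in checking the injectivity condition needed for part~3 of Lemma~\ref{l:forcing} and exhibiting the guardings for part~2 --- with only the harmless imprecision that the set reachable from $a'_\psi$ in two moves also contains the other vertices $a'''_{\psi'}$ and the blocker sinks, none of which lies in $\calA$.
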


\begin{proof}
Again, a simple consequence of Lemma~\ref{l:forcing}: there is no edge leading towards~$X$, thus
if some Arnold leaves~$X$, there will be always less than~$|\calA|$ cops in~$X$. When (now or in the next move)
Robber is in~$S_1$, he can use the gadget $\FORCE(S_1, \calA)$ according to the third part of Lemma~\ref{l:forcing}.
This proves the first part and the third part.
By the second part of Lemma~\ref{l:forcing} follows the second part.
\qed\end{proof}

\begin{lemma}\label{l:blocking}
Let a set $S$ be blocked by a vertex $u$. (That is, the Blocker gadget
$\BLOCK(u,S)$ is a part of the constructed graph $\dir G$.)
%FIXME: opravit zneni
Suppose that no other edges leading to $q_{u,S}$ are part of~$\dir G$.
Suppose Robber is on $u \in S_2$ and it is his
turn to move. Then there must be at least one cop in the set $S$, otherwise the
cop-player loses the game.
If there is at least one cop in $S$, Robber may not enter $p_{u,S}$, otherwise
he loses the game.
\end{lemma}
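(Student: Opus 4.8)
The plan is to analyze the single two-step threat that the Blocker creates: from $u$ Robber can try to walk $u \to p_{u,S} \to q_{u,S}$ and, since $q_{u,S} \in V_C$, reaching it means entering the guarded region and winning. Everything reduces to deciding whether the cop-player can cover $q_{u,S}$ at the moment Robber sits on $p_{u,S}$. The two directed edges $u \to p_{u,S}$ and $p_{u,S} \to q_{u,S}$ are the only edges incident to $p_{u,S}$, so once Robber is on $p_{u,S}$ his only options are to advance to $q_{u,S}$ or to pass; there is no edge back to $u$. I will use this rigidity throughout.

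For the second claim (if there is a cop in $S$, Robber must not enter $p_{u,S}$), I would argue directly. Suppose Robber moves from $u$ to $p_{u,S}$ and let $s\in S$ carry a cop. Because $S\subseteq V_C$ and the edge $(s,q_{u,S})$ is present with $q_{u,S}\in V_C$, on the ensuing cop turn the cop can legally move $s \to q_{u,S}$. Now $q_{u,S}$ is occupied, so Robber's only forward move $p_{u,S}\to q_{u,S}$ is illegal (Robber may enter $V_C$ only at cop-free vertices), and there is no edge from $p_{u,S}$ back into the strongly connected region. The cop simply stays on $q_{u,S}$ for the rest of the game, and Robber is permanently stranded on $p_{u,S}$; by the standing observation that an immobile Robber loses, this is a robber-loss.

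For the first claim (a cop in $S$ is necessary), I would argue by the threat together with the normal-position lemmas. Assume no cop sits in $S$. By Lemmas~\ref{l:HQ}, \ref{l:normalposvarcops}, and~\ref{l:normalposarnolds}, at the described moment (Robber on $u\in S_2$, his turn) either the state is already a robber-win, in which case the cop-player has lost as claimed, or every cop is in a normal position. In the latter case no cop occupies $q_{u,S}$, since $q_{u,S}$ is not a normal position of any pawn, and none occupies $p_{u,S}\in V_R$. Robber now moves $u\to p_{u,S}$. By hypothesis the only in-neighbours of $q_{u,S}$ are the vertices of $S$ together with $p_{u,S}$; the latter holds Robber and lies in $V_R$, and $S$ contains no cop, so after the cops' reply no cop can be on $q_{u,S}$. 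Robber then advances $p_{u,S}\to q_{u,S}\in V_C$ unopposed and wins.

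The one delicate point, and the step I expect to be the main obstacle, is reconciling the formal statement (a cop must be \emph{in $S$}) with the informal phrasing that allowed a cop \emph{on $q_{u,S}$}: a cop idling on $q_{u,S}$ would also foil the threat, which would make the bare statement false. This is exactly where the normal-position lemmas do the real work, ruling out any cop parked on the entry vertex $q_{u,S}$ whenever Robber is on $u\in S_2$ with the move; I would make sure to invoke them explicitly rather than treat the threat argument as purely local. The remaining checks — that $p_{u,S}$ has no back-edge, that $S\subseteq V_C$ so the defending move is legal, and that the timing of turns lets the cop reach $q_{u,S}$ before Robber — are routine consequences of the Blocker's definition.
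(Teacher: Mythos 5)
Your proof is correct and follows essentially the same route as the paper's: the paper likewise invokes Lemmas~\ref{l:HQ}, \ref{l:normalposvarcops} and~\ref{l:normalposarnolds} to rule out a cop parked on $q_{u,S}$, then observes that with $S$ cop-free Robber reaches $q_{u,S}$ in two unstoppable moves, and that with a cop in $S$ the reply $s \to q_{u,S}$ strands Robber on $p_{u,S}$ forever. The ``delicate point'' you flag is exactly the first step of the paper's own argument, so your proposal matches it in both structure and substance.
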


\begin{proof}
By Lemma~\ref{l:HQ}, \ref{l:normalposvarcops} and \ref{l:normalposarnolds},
there is no cop in~$q_{u,S}$ -- all cops must be at vertices distinct from
all vertices of form~$q_{u,S}$.
This means that Robber can reach this vertex in two moves, and the
cops cannot stop him.
On the other hand, if Robber moves to $p_{u,S}$ while there is
a cop on $S$, then this cop moves to $q_{u,S}$, blocking Robber
forever.
\qed\end{proof}

\begin{lemma}\label{l:beginning}
At the beginning of the game $\G$,
the prerequisites of the Robber Switches $\ROBS(y)$, $y\in R$ are satisfied.
\end{lemma}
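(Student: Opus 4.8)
The plan is to verify each clause of the prerequisite of $\ROBS(y)$ separately against the prescribed starting configuration fixed in Section~\ref{s:bigpicture}. Recall that the prerequisite demands: (i) Robber stands on $\var{RM}$; (ii) the variable cop of $\VAR(y)$ sits on $T_y$ or $F_y$; (iii) Commander is on $\var{HQ}$; (iv) no cop other than that variable cop can access a vertex of $\VAR(y)$; and (v) it is Robber's turn. Conditions (i), (ii), (iii), (v) are immediate from the definition of the start: the prescribed position of Robber is $r=\var{RM}$; the variable cop of $\VAR(y)$ starts at $T_y$ or $F_y$ according to the initial assignment $\alpha(y)$; Commander starts at $\var{HQ}$; and in the prescribed-positions game $\G=(\dir G,V_C,c,S,r)$ we begin right after both placements, so Robber moves first.

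The only clause needing work is (iv), and I would prove it by an in-arc analysis of the cycle $\VAR(y)=(T_y,\var{FT}_y,F_y,\var{TF}_y)$, reading the condition as ``no other cop can step onto a vertex of $\VAR(y)$ in its next move.'' First I would list every arc of $\dir G$ whose head is one of these four vertices. Scanning the construction, such arcs arise only in three ways: the arcs internal to the cycle itself; the arcs $GT_y\to T_y$ and $GF_y\to F_y$ from the Commander gadget $\COM$ (step~\ref{item:commander}); and the arcs $\bar T_y\to T_y$, $\bar F_y\to F_y$ created by the forcers $\FORCE(S_2,\calT)$ and $\FORCE(S_2,\calF)$ (step~\ref{item:force}). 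Crucially, every blocker $\BLOCK(u,S)$ occurring in a Robber Gate, a Cop Gate, or in $\ROBS(y)$ itself contributes only arcs $s\to q_{u,S}$ directed \emph{out} of its set $S$; since the $\VAR(y)$-vertices occur in these gadgets only inside such a set $S$, no blocker adds an arc whose head lies in $\VAR(y)$. This is the one place where care is required: one must be certain that none of the many gate and blocker edges sneaks into the cycle, which is what makes this the main obstacle, albeit a routine one once the enumeration is done carefully.

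Having isolated the only relevant in-neighbours $\{GT_y,GF_y,\bar T_y,\bar F_y\}$ (besides the cycle itself), I would finish by checking their tails against the starting placement of the cops. The forcing vertices $\bar T_y,\bar F_y$ lie in $V_R$, so they can never carry a cop. The Commander vertices $GT_y,GF_y$ start empty, and the only vertex from which a cop could step onto them is $G_y$, which is likewise empty at the start (Commander is two arcs away, at $\var{HQ}$). All remaining cops --- the variable cops of the cells $\VAR(x)$ with $x\neq y$, and every $\psi$-Arnold --- start on $T_x$, $F_x$, or $a_\psi$, whose out-arcs lead only within their own cycles and into blocker entry vertices of the form $q_{u,S}$, never into $\VAR(y)$. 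Hence the variable cop of $\VAR(y)$ is the only cop that can reach a vertex of $\VAR(y)$ in its next move, which establishes (iv) and completes the proof.
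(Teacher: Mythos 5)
Your proposal is correct in substance and follows essentially the same route as the paper: the paper's own proof is a terse clause-by-clause check of the prescribed start (Robber on $\var{RM}$, variable cops on $T_x$ or $F_x$, Commander on $\var{HQ}$), and your in-arc enumeration is a fleshed-out version of its final assertion. The one place you prove less than the paper claims is clause (iv): you read ``access'' as ``enter in the next move,'' whereas the paper asserts that no other cop can \emph{ever} reach $\VAR(x)$. The stronger reading is the one actually needed, since the Robber Switch scenario spans several turns; it also explains why Commander must not count as an ``other'' cop --- he \emph{can} reach $T_y$ in three moves via $G_y$ and $GT_y$, so under the ``ever'' reading he is excluded (his position being pinned separately by clause (iii)), while under your one-move reading he is harmlessly included. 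Fortunately, your own enumeration yields the stronger statement with one more step: the only in-neighbours of $\VAR(y)$ lying in $V_C$ are the cycle vertices themselves and $GT_y, GF_y$; the only $V_C$-in-neighbour of those is $G_y$ (the arc from $Sw_y$ starts in $V_R$), of $G_y$ it is $\var{HQ}$, and of $\var{HQ}$ the vertices $G_x$. Since cops can only ever stand on $V_C$, this set is closed under taking in-neighbours along cop trajectories, so any cop that ever enters $\VAR(y)$ must have started in it --- and the only cops starting there are Commander and the variable cop of $\VAR(y)$, which is exactly the paper's ``ever'' claim.
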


\begin{proof}
By the construction, Robber is on~$\var{RM}$ and the variable cops for
each variable gadget $\VAR(x)$ are either on $T_x$ or $F_x$. Commander is on~$\var{HQ}$.
No other cop can ever reach $\VAR(x)$.
\qed\end{proof}

\begin{lemma}\label{l:robs}
Consider a game state satisfying the prerequisites of all Robber switch\-es $\ROBS(y)$, $y\in R$.
Then the scenarios of all $\ROBS(y)$, $y\in R$ are binding scenarios.
Moreover, after execution of these binding scenarios, the prerequisites of all
Robber Gates $\ROBG(\varphi)$, $\varphi\in F_R$ are satisfied, unless there is an Arnold
at~$a''$.
\end{lemma}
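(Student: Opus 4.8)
The statement of Lemma~\ref{l:robs} has two parts. First, assuming the prerequisites of every Robber Switch $\ROBS(y)$ hold, I must show that the declared scenario of each $\ROBS(y)$ is a \emph{binding} scenario---meaning any deviation by the cop-player loses within ten turns, and any deviation by the robber-player lands in a losing state. Second, I must show that after the scenario is executed (Robber ends on $\var{RT}$), the prerequisites of every Robber Gate $\ROBG(\varphi)$ are met, with the sole exception being the situation where some Arnold sits on $a''$ (the cop-won configuration). The plan is to leverage the forcing/blocking infrastructure already established in Lemmas~\ref{l:forcing}--\ref{l:blocking}, rather than re-deriving anything about cop mobility from scratch.

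**Verifying the scenario is binding.**

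Starting from Robber at $\var{RM}$ with the Commander at $\var{HQ}$ and the variable cop of $\VAR(y)$ on $T_y$ or $F_y$, I would trace both choices. If Robber enters $Sw_y$, the Blocker $\BLOCK(Sw_y,\{\var{FT}_y,\var{TF}_y\})$ forces (via Lemma~\ref{l:blocking}) a cop to be reachable into those vertices, and the edge $(Sw_y,G_y)$ together with $(\var{RM},\var{HQ})$ means Commander is dragged to $G_y$; the variable cop must step onto $\var{FT}_y$ or $\var{TF}_y$ or else Lemma~\ref{l:normalposvarcops} makes it a robber-win. When Robber then advances to $\var{RT}$, the edge $(\var{RT},\var{HQ})$ lets Commander return and the variable cop completes its move, flipping the value. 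The alternative branch, Robber going directly $\var{RM}\to\var{RT}$ via the edge $(\var{RM},\var{RT})$, leaves all variable cops fixed. For the binding property I must check deviations: if the cop-player moves Commander or a variable cop out of normal position, Lemmas~\ref{l:HQ} and~\ref{l:normalposvarcops} give a robber-win quickly; if Robber deviates (e.g.\ enters $p_{Sw_y,\cdot}$ while a cop guards the blocked set), Lemma~\ref{l:blocking} shows he loses. The ten-turn bound is comfortably met since each micro-step is one move. A key subtlety is that the switches for different $y\in R$ share the vertices $\var{RM},\var{RT},\var{HQ}$, so I must confirm Robber can engage \emph{at most one} switch per cycle---this follows because committing to $Sw_y$ consumes the single move out of $\var{RM}$.

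**Establishing the Robber Gate prerequisites.**

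After the scenario, Robber stands on $\var{RT}$, Commander is back on $\var{HQ}$, and each $\VAR(x)$ holds exactly one cop on $T_x$ or $F_x$. The prerequisites of $\ROBG(\varphi)$ demand precisely this, plus that no cop other than the variable cops can reach $z_\varphi$ in one move. I would verify the latter by inspecting the edges into $z_\varphi=q_{\var{RT},V}$: by construction the only in-edges come from $\calT\cup\calF$ vertices of the clause---\emph{unless} an Arnold has reached $a''$, which by the Cop Gate construction adds edges $(a'',z_\varphi)$. This is exactly the stated exception. So the claim reduces to confirming that, short of the $a''$ situation, the forcing gadgets of step~\ref{item:force} keep all Arnolds in normal position (Lemma~\ref{l:normalposarnolds}) and all variable cops pinned.

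**The main obstacle.**

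The hard part will be the bookkeeping of the \emph{simultaneity} of moves along the shared vertices: showing that Commander's forced detour to $G_y$ and the variable cop's two-step traversal of the cycle $(T_y,\var{FT}_y,F_y,\var{TF}_y)$ interlock correctly with Robber's tempo, so that at no intermediate instant a vertex of $\calT\cup\calF$ is left unguarded in a way Robber could exploit via $\FORCE(S_2,\calT)$ or $\FORCE(S_2,\calF)$. I expect the resolution to be that while Commander occupies $G_y$ he simultaneously guards both $T_y$ and $F_y$ (per the paths $(G_y,GT_y,T_y)$ and $(G_y,GF_y,F_y)$), covering exactly the window during which the variable cop is in transit, so the guarding invariant of Lemma~\ref{l:normalposvarcops} is preserved throughout.
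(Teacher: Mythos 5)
Your proposal is correct and takes essentially the same approach as the paper: tracing the switch scenario (Robber to $Sw_y$ forces Commander to $G_y$ and the variable cop into $\var{FT}_y/\var{TF}_y$, then Robber's move to $\var{RT}$ forces Commander back to $\var{HQ}$ and the variable cop to complete the flip), with deviations handled by Lemmas~\ref{l:blocking}, \ref{l:forcing}, \ref{l:HQ} and~\ref{l:normalposvarcops}. Your explicit check of the in-edges of $z_\varphi$ (identifying the $(a'',z_\varphi)$ edges as the source of the ``unless an Arnold is at $a''$'' exception) and the observation that Commander at $G_y$ covers $T_y$ and $F_y$ during the variable cop's transit are in fact spelled out more fully than in the paper's own terse proof, but they rest on the same lemmas and construction.
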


\begin{proof}
If Robber moves to $Sw_y$, Commander must move to $G_y$, or Robber enters $V_C$ there.
(By the construction, no other cop than Commander can be at $G_y$.)

Also, by Lemma~\ref{l:blocking}, the variable cop in $\VAR(y)$ must move to $\var{TF}_y$ or $\var{FT}_y$.
By Lemma~\ref{l:forcing} and~\ref{l:blocking}, Robber may not enter the forcing vertices, and thus continues
to~$\var{RT}$. After this, Commander must move back to $\var{HQ}$, thus the variable cop in $\VAR(y)$ to
$T_y$ or $F_y$ (Lemma~\ref{l:normalposvarcops}).
If Robber moves directly to~$\var{RT}$, no cop may move (by Lemma~\ref{l:forcing} and~\ref{l:blocking}).
\qed\end{proof}

\begin{lemma}\label{l:robbergate}
Consider a game state satisfying the prerequisites of all Robber Gates $\ROBG(\varphi)$, $\varphi\in F_R$.
Then the scenarios of all Robber Gates $\ROBG(\varphi)$, $\varphi\in F_R$ are binding scenarios.
Moreover, after execution of these binding scenarios, the prerequisites of the Cop Switch
are satisfied.
\end{lemma}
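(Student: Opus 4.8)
The plan is to reduce the whole Robber-Gate analysis to the blocking principle of Lemma~\ref{l:blocking} together with the forcing lemmas, the one new ingredient being a combinatorial correspondence between occupied vertices and falsified clauses. Fix a clause $\varphi=(\ell_1 \& \dots \& \ell_{12})$ of $F_R$, with blocker set $V=\{T_x\mid \text{some }\ell_i=\neg x\}\cup\{F_x\mid \text{some }\ell_i=x\}$ and $z'_\varphi=p_{\var{RT},V}$, $z_\varphi=q_{\var{RT},V}$. First I would verify the key equivalence: a variable cop occupies some vertex of $V$ if and only if $\varphi$ is false under the current assignment. Indeed, by the prerequisites each $\VAR(x)$ with $x\in\varphi$ carries exactly one cop, on $T_x$ (so $x$ true) or $F_x$ (so $x$ false); a cop on $T_x\in V$ witnesses a falsified literal $\neg x$, a cop on $F_x\in V$ a falsified literal $x$, and conversely any falsified literal places its cop on a vertex of $V$. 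Hence $V$ is occupied exactly when at least one literal fails, i.e.\ when $\varphi$ is unsatisfied.

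Next I would read off the scenario. The Robber Gate is the blocker $\BLOCK(\var{RT},V)$, and the prerequisite that no cop other than the variable cops of $\varphi$ can reach $z_\varphi$ in one move supplies (for the occupied in-neighbours) the structural hypothesis of Lemma~\ref{l:blocking}. Combining with the equivalence: if $\varphi$ is satisfied then no cop can reach $z_\varphi$, so after Robber steps onto $z'_\varphi$ the cops cannot cover $z_\varphi$ and Robber enters $V_C$ on his next move; if $\varphi$ is false then a cop on $V$ answers $z'_\varphi$ by moving onto $z_\varphi$, after which Robber is stranded on $z'_\varphi$, whose only out-edge leads to the now-guarded $z_\varphi$, a lost position. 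Thus entering an unsatisfied gate is a losing deviation for Robber and a satisfied gate wins for him, which is precisely the claimed scenario. The only remaining out-neighbours of $\var{RT}$ are $\var{HQ}$, blocked by Commander, and $\var{CM}$, the onward move taken when Robber chooses not to (or cannot) win.

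For the binding property I would dispose of the cop-player by forcing. Throughout this phase Robber sits in $S_2$ (since $\var{RT}\in S_1\subseteq S_2$), so Lemmas~\ref{l:forcing}, \ref{l:normalposvarcops} and~\ref{l:normalposarnolds} pin the cop-player: every $T_x$ and $F_x$ stays guarded, Commander stays on $\var{HQ}$ or some $G_x$, and every Arnold stays in place, for otherwise a forced vertex is left unguarded, and since Robber can reach the forcing vertex $f_{S_2,\calT}$ or $f_{S_2,\calF}$ from $\var{CM}$ in a constant number of steps, the cop loses well within ten turns. In particular no variable cop moves, so the cop-player's only genuine action is the forced block-or-lose reply to an entered gate.

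Finally, when no clause of $F_R$ is satisfied the scenario has Robber move from $\var{RT}$ to $\var{CM}$ with no cop moving, so the resulting state has Robber on $\var{CM}$, every variable cop still on its $T_x$ or $F_x$, Commander still on $\var{HQ}$, and the cop to move---exactly the prerequisites of the Cop Switch. The main obstacle I anticipate is the bookkeeping around the extra in-edge $(a'',z_\varphi)$ contributed by the Cop Gates, which violates the literal hypothesis of Lemma~\ref{l:blocking}; I would handle it by invoking the Robber-Gate prerequisite, which forbids any cop other than the relevant variable cops (in particular an Arnold on $a''$) from reaching $z_\varphi$ in one move, so that the occupied in-neighbours of $z_\varphi$ are exactly the cops on $V$ and the blocking argument applies unchanged.
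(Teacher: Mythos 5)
Your proposal is correct and follows essentially the same route as the paper's proof: apply Lemma~\ref{l:blocking} to the blocker $\BLOCK(\var{RT},V)$, note that by construction a cop occupies a vertex of $V$ exactly when $\varphi$ is falsified, and conclude that the Cop Switch prerequisites hold because no cop turn occurs between Robber leaving $\var{RT}$ and arriving at $\var{CM}$. The only (harmless) difference is that the paper additionally discusses the case of an Arnold occupying $a''$ (relevant to the surrounding game flow), whereas you rule that case out via the prerequisite that no cop other than the variable cops can reach $z_\varphi$ in one move, which is consistent with the lemma's hypotheses.
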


\begin{proof}
If no other cop may access $z_\varphi$, Lemma~\ref{l:blocking} applies:
Robber may enter $z_\varphi$ if and only if there is no cop in the set~$V$ (see the definition
of Robber Gates). By the construction, this happens precisely if $\varphi$ is satisfied by the current setting of the variables.
If he may not enter $z_\varphi$, the only non-losing move for him is to move to $\var{CM}$.
If the vertex $a''$ is occupied (the cop-player has ``won'' already), Robber may not
enter $z_\varphi$ either and moves to $\var{CM}$.

After the Robber's move to $\var{CM}$, the prerequisites of the Cops Switch are satisfied, as the cops did not move yet.
\qed\end{proof}

\begin{lemma}\label{l:cops}
Let us consider a game state satisfying the prerequisites of the Cop Switch.
Then the scenario of the Cop Switch is a binding scenario.
Moreover, after execution of it, the prerequisites of all Cop Gates $\COPG(\psi)$, $\psi\in F_C$,
are satisfied.
\end{lemma}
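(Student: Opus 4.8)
The plan is to trace the play move by move, exactly as in the proof of Lemma~\ref{l:robs} for the Robber Switch, but with the roles reversed: here the cop-player makes the free choice, and the blocker of $\COPS$ confines that choice to variables in $C$. Throughout I rely on Lemmas~\ref{l:forcing}, \ref{l:HQ}, \ref{l:normalposvarcops} and \ref{l:normalposarnolds}, noting first that, since we are in the Cop Move phase and no Arnold has yet been activated, every $\psi$-Arnold sits on $a_\psi$; hence by Lemma~\ref{l:normalposarnolds} the forcing vertices $f_{S_1,\calA}$ and $f_{S_1,\{a''\}}$ stay off-limits to Robber during the whole scenario. I would then analyse the cop's opening move: after it, Robber is still on $\var{CM}$ and it is his turn, so Lemma~\ref{l:blocking} applied to the blocker $\BLOCK(\var{CM},\{G_x\mid x\in C\})$ forces some cop onto a vertex $G_x$ with $x\in C$; as only Commander can reach any $G_x$, and only from $\var{HQ}$, Commander must move to exactly one such $G_x$ — precisely the cop's choice of which $C$-variable to touch. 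Since Commander has left $\var{HQ}$ and now guards only $T_x,F_x$ (through $GT_x,GF_x$), Lemma~\ref{l:normalposvarcops} forces every other variable cop to stay on its $T_{x'}/F_{x'}$, so at most the variable cop of $\VAR(x)$ may move, and from $T_x$ (resp.\ $F_x$) its only guard-preserving options are to stay or step to $\var{FT}_x$ (resp.\ $\var{TF}_x$). Any genuine cop deviation — Commander staying at $\var{HQ}$, moving to some $G_y$ with $y\in R$, or a second variable cop leaving its value vertex — either empties $\{G_x\mid x\in C\}$ or unguards some $T_{x'}/F_{x'}$, and Robber wins within two moves by Lemma~\ref{l:blocking} or~\ref{l:normalposvarcops}.

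Next, Robber is pinned: from $\var{CM}$ the only edges lead to $\var{RW}$, to the middle vertex of the $\COPS$ blocker, and to the forcing vertices (there is no edge $\var{CM}\to\var{HQ}$); the blocker vertex is a losing state for him because Commander now occupies the blocked set (Lemma~\ref{l:blocking}), and the forcing vertices are losing by Lemmas~\ref{l:forcing}, \ref{l:normalposvarcops} and~\ref{l:normalposarnolds}, so Robber must advance to $\var{RW}$. The crucial step is the cop's second move, now with Robber on $\var{RW}$: because the edge $\var{RW}\to\var{HQ}$ exists and $\var{HQ}$ is currently empty, were Commander to remain on $G_x$ then Robber would reply $\var{RW}\to\var{HQ}\in V_C$ and win at once, so Commander is forced back to $\var{HQ}$. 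But then $G_x$ no longer guards $T_x,F_x$, whence by Lemma~\ref{l:normalposvarcops} the variable cop of $\VAR(x)$, if it sits on $\var{FT}_x$ (resp.\ $\var{TF}_x$), must take its only available step to $F_x$ (resp.\ $T_x$), completing the flip; if it stayed on its value vertex it stays. Finally, with Commander back on $\var{HQ}$, Robber on $\var{RW}$ can only advance to $CT$ (the vertex $\var{HQ}$ is now blocked, the forcing vertices lose). This exhibits the scenario as the unique non-losing line: each alternative cop move loses for the cop within two turns, and each alternative Robber move lands him in a losing state, both comfortably inside the ten turns demanded of a binding scenario.

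It then remains to read off the final configuration: Robber on $CT$, Commander on $\var{HQ}$, exactly one cop on some $T_x/F_x$ in each $\VAR(x)$ (with at most one $C$-variable flipped), every Arnold on $a_\psi$, and the cop to move — which is precisely the list of prerequisites of every Cop Gate $\COPG(\psi)$, the access condition holding by construction since no cop other than the relevant Arnold and the variable cops of $\psi$'s variables lies within three moves of the gadget. The main obstacle, and the only genuinely new point compared with Lemma~\ref{l:robs}, is forcing Commander home on the second move: this does \emph{not} follow from the blocker, which is inactive once Robber leaves $\var{CM}$, but from the standing threat that Robber walk straight into the vacated $\var{HQ}$ along $\var{RW}\to\var{HQ}$ — and it is exactly this threat that simultaneously compels the single variable flip to be carried to completion.
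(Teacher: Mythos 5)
Your proof is correct and follows essentially the same move-by-move trace as the paper's: the Cop Switch blocker forces Commander onto some $G_x$ with $x\in C$, Lemma~\ref{l:normalposvarcops} confines all other variable cops and compels the single flip to be completed, and Robber's forced path $\var{CM}\to\var{RW}\to CT$ delivers the Cop Gate prerequisites. You additionally make explicit the one step the paper leaves implicit — that Commander's return to $\var{HQ}$ is compelled by the threat along the edge $\var{RW}\to\var{HQ}$ rather than by the blocker — which is a faithful reading of the construction (step~\ref{item:edges}) and not a deviation from it.
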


\begin{proof}
When Robber stands on $\var{CM}$, Commander may leave $\var{HQ}$, as the edge $(CM,HQ)$
is not part of~$\dir G$. In fact, Commander has to move to a
vertex $G_x$ where $x \in C$ is a cop-player's variable because of the Cop Switch gadget.
If the cop-player decided to switch $x$,
the variable cop in $\VAR(x)$ moves also in the first move to $\var{FT}_x$ or $\var{TF}_x$.
Next, Robber moves to $\var{RW}$, after which Commander must return to $\var{HQ}$ and thus
the cop in $\VAR(x)$ finishes his move to $T_x$ or $F_x$ (or keeps staying in place).
Finally, Robber moves to $CT$, which leads to a game state satisfying all prerequisites of all Cop Gates.
\qed\end{proof}

\begin{lemma}\label{l:arnoldwin}
Consider a game state where $\psi$-Arnold is on the vertex $a''$ for some clause $\psi$,
other cops are in normal positions, and Robber is in $S_2$ (see Definition~\ref{def:normal}).
Then this is a cop-win state.
\end{lemma}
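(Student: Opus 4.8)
The plan is to exhibit an explicit cop strategy from this state and to verify that Robber can never complete a move into $V_C$. The strategy is essentially the defensive half of the binding scenarios of Lemmas~\ref{l:robs} and~\ref{l:cops}, augmented by the single observation that the $\psi$-Arnold parked on $a''$ permanently disables every Robber Gate. Concretely, the cop-player keeps this Arnold on $a''$, keeps the other Arnolds on their $a_\psi$, keeps every variable cop inside its cell, and keeps Commander on $\var{HQ}$; he moves Commander to a vertex $G_x$ (and, for a Robber variable, the matching variable cop to $\var{FT}_x$ or $\var{TF}_x$) only as the forced response when Robber enters a Robber Switch $Sw_y$ or reaches $\var{CM}$, exactly as in the scenarios of $\ROBS(y)$ and $\COPS$. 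The cop-player never initiates a change of any of his own variables.

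First I would record the decisive feature of $a''$: by construction it has an edge to every Robber-Gate entry $z_\varphi$, it lies in the blocked set of every Cop-Gate blocker, and it reaches its own $a_\psi$ in two steps via $a''\to a'''_\psi\to a_\psi$. Consequently a cop on $a''$ simultaneously (i)~guards all the $z_\varphi$, (ii)~covers every Cop-Gate blocker in the sense of Lemma~\ref{l:blocking}, (iii)~witnesses the guarding of the one missing element $a_\psi$ of $\calA$, and (iv)~witnesses the guarding of $\{a''\}$; the remaining $a_{\psi'}$ are witnessed by their own Arnolds. Because the ``guarded'' condition of each $\FORCE$ gadget is an independent requirement, it is legitimate for this single Arnold to serve as witness in several of the gadgets of step~\ref{item:force} at once, so $\calT$, $\calF$, $\calA$ and $\{a''\}$ stay guarded throughout (the variable cops witness $\calT,\calF$, with Commander substituting for the one variable whose cop is temporarily displaced during a switch).

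Next I would go around the Basic Cycle and examine each position Robber can occupy, using that every entry path into $V_C$ is a directed dead end: from a blocker vertex $p_{u,S}$ the only continuation is to $q_{u,S}$, and from a forcing vertex $f_{S,V}$ or $\bar v$ the only continuations lead into $V$. Hence the instant Robber steps onto any such vertex he has committed to a one-way path and can threaten exactly one entrance. If Robber enters a Robber-Gate vertex $z'_\varphi$, the Arnold answers $a''\to z_\varphi$; if he enters a Cop-Gate $p$-vertex, the Arnold answers into that blocker's entry; in either case Robber's unique next move is into an occupied vertex, so he is immobilised forever and loses (Lemma~\ref{l:blocking} and the remark following the Robber-Gate scenario). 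If Robber enters any forcing vertex, the guarding established above together with the second part of Lemma~\ref{l:forcing} makes it a cop-win. The Robber-Switch and Cop-Switch blockers are covered by the standard responses of Lemmas~\ref{l:robs} and~\ref{l:cops}, which involve only Commander and the variable cops and are therefore unaffected by the Arnold's displacement; finally $\var{HQ}$ and each $G_y$ remain occupied by Commander exactly when Robber could otherwise step onto them, where one uses that $\var{CM}$ has no edge to $\var{HQ}$, so Commander may safely vacate $\var{HQ}$ while Robber sits on $\var{CM}$.

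The step I expect to be the main obstacle is justifying that the lone Arnold on $a''$ really discharges all of its duties at once: it must guard every $z_\varphi$, cover every Cop-Gate blocker, and act as a forcing witness, yet it is a single pawn, and moving it to intercept one threat vacates $a''$ and seemingly exposes the others. The resolution is precisely the dead-end structure noted above: because committing to any entrance traps Robber after a single further move, he can never present two simultaneous live threats, so the Arnold only ever has to answer one of them, after which Robber can no longer move and the transient exposure of the remaining entrances is irrelevant. Making this ``at most one live threat at a time'' argument precise, and checking it against each Robber position in $S_2$, is the technical heart of the proof.
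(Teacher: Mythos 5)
Your proposal is correct and follows essentially the same approach as the paper: the same reactive cop strategy (Arnold parked on $a''$ blocking every $z_\varphi$, other cops moving only when forced via the Robber-Switch/Cop-Switch scenarios), the same observation that the Enforcer $\FORCE(S_1,\calA)$ can recall the Arnold but only at the cost of trapping Robber, and the same conclusion via Lemma~\ref{l:blocking}. The paper dismisses the case analysis as ``easy to check''; your dead-end / ``one live threat at a time'' argument is just a more explicit write-up of that same check.
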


\begin{proof}
As long as Robber is in normal position, the cop-player does not move with any
of his cops until he is forced to:
one Arnold stays at $a''$, the others at their initial positions -- even if another
clause in $F_C$ becomes satisfied. The variable-cops and Commander follow the scenarios
of the Cop-Switch and Robber-Switch. Robber may force the Arnold at~$a''$ to move back to his
original position by using the Enforcer gadget~$\FORCE(S_1,\calA)$; however, he will be trapped afterwards.

It is easy to check that Robber may not enter the cop-region,
as in binding scenarios he may do this only in some Robber Gate. But if he moves to some vertex $z'_\varphi$,
the Arnold at $a''$ moves to $z_\varphi$, thus blocking Robber forever.
\qed\end{proof}

\begin{lemma}\label{l:copgate}
Consider a game state satisfying the prerequisites of all Cop Gates $\COPG(\psi)$, $\psi\in F_C$.
Then the scenarios of all $\COPG(\psi)$, $\psi\in F_C$, are binding scenarios.
Moreover, after execution of these binding scenarios, the prerequisites of all Robber Switches $\ROBS(y)$, $y\in R$,
are satisfied.
\end{lemma}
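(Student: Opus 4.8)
The plan is to work from the cop's move with Robber fixed on $CT$, to pin down every cop except one Arnold by the forcing machinery, and then to read off the behaviour of the free Arnold from the Blocker lemma, splitting on whether some clause of $F_C$ is satisfied.

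First I would fix the background. Since Robber stands on $CT\in S_2$, Lemmas~\ref{l:HQ}, \ref{l:normalposvarcops} and~\ref{l:normalposarnolds} force Commander onto $\var{HQ}$, every variable cop onto its $T_x$ or $F_x$, and every Arnold into the set $X$ of Lemma~\ref{l:normalposarnolds}, any deviation being a robber-win in a handful of moves; together with Assumption~A this leaves a single Arnold free to advance along the directed cycle $(a_\psi,a'_\psi,a'',a'''_\psi)$. The key observation is that the blocker $\BLOCK(CT,\{v_i,a_\psi,a''\})$ attached to a literal $\ell_i$ of $\psi$ (where $v_i$ is $T_x$ if $\ell_i=x$ and $F_x$ if $\ell_i=\neg x$) has its vertex $v_i$ occupied by the variable cop exactly when $\ell_i$ is true; as these entry vertices carry no other in-edges, Lemma~\ref{l:blocking}, applied after the cop's move when it is again Robber's turn on $CT$, forces each such blocker to stay covered.

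Next I would split into cases. If no clause is satisfied, every $\psi$ has a false literal whose vertex $v_i$ is empty, so the corresponding blocker is covered only while that $\psi$-Arnold sits on $a_\psi$; hence moving any Arnold off $a_\psi$ empties a blocker and loses by Lemma~\ref{l:blocking}, all cops must stay, and Robber, unable to enter any $p_{CT,\cdot}$ or the cop-occupied $\var{HQ}$, is reduced to the move $CT\to\var{RM}$. With the cops still in normal positions, and after the ensuing forced cop pass, Robber is on $\var{RM}$ on his own turn, i.e.\ exactly the prerequisites of every $\ROBS(y)$, which is the ``moreover'' claim. If instead some clause $\psi$ is satisfied, all its literal vertices are occupied, so its blockers remain covered independently of the $\psi$-Arnold; the cop plays $a_\psi\to a'_\psi$, Robber is again forced to $\var{RM}$, the cop plays $a'_\psi\to a''$, and Lemma~\ref{l:arnoldwin} certifies a cop-win.

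The hard part will be the timing argument underlying the satisfied case and binding-ness. While the Arnold transits $a'_\psi$ the vertex $a_\psi$ is momentarily unguarded, since from $a'_\psi$ one reaches $a_\psi$ only in three steps, so $\FORCE(S_1,\calA)$ would turn this into a robber-win --- but only on a Robber turn with Robber in $S_1$. I would therefore argue that the exposure occurs solely on the cop's turn or while Robber rests on $CT\in S_2\setminus S_1$, so the forcing never triggers, and that once the Arnold reaches $a''$ it guards both $a_\psi$ (via $a''\to a'''_\psi\to a_\psi$) and $a''$ itself, fulfilling $\FORCE(S_1,\calA)$ and $\FORCE(S_1,\{a''\})$ on Robber's following turn. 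The same bookkeeping yields binding-ness: a cop who stalls an Arnold at $a'_\psi$ loses to $\FORCE(S_1,\calA)$, a cop who advances an Arnold for an unsatisfied clause loses to an emptied blocker, and any Robber move other than $CT\to\var{RM}$ (into an occupied blocker, or onto $\var{HQ}$) is losing --- all comfortably within the required ten turns.
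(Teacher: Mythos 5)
Your proposal is correct and follows essentially the same route as the paper's proof: a case split on whether a clause $\psi$ is satisfied, with Lemma~\ref{l:blocking} showing the $\psi$-Arnold may leave $a_\psi$ exactly when $\psi$ is satisfied, the gadget $\FORCE(S_1,\calA)$ forcing the Arnold to either stay at $a_\psi$ or complete his move to $a''$ once Robber reaches $\var{RM}$, and Lemma~\ref{l:arnoldwin} certifying the cop-win when $a''$ is occupied. Your treatment is in fact more explicit than the paper's on two points the paper leaves implicit --- pinning the remaining cops via Lemmas~\ref{l:HQ}, \ref{l:normalposvarcops}, \ref{l:normalposarnolds}, and the turn-parity argument for why the momentary exposure of $a_\psi$ while the Arnold transits $a'_\psi$ never triggers the forcing gadget.
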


\begin{proof}
If $\psi$ is satisfied by the current setting of variables in the variable cells,
then we do not need the $\psi$-Arnold to protect the entry vertex of any of the blockers,
thus he may move to $a'_\psi$.
If Robber enters some of the Blockers, then he loses by Lemma~\ref{l:blocking}.
On the other hand, if $\psi$ is not satisfied, and $\psi$-Arnold leaves $a_\psi$,
then the cop-player loses: again by Lemma~\ref{l:blocking}.

Next, Robber moves to $\var{RM}$, which forces all vertices $a_\psi$, thus the $\psi$-Arnold
either stays at $a_\psi$ or, if he left it already, finishes his move to $a''$.

Therefore, the scenarios of the Cop Gates are binding.
It is easy to see that the prerequisites of all Robber Switches are satisfied at the end of the binding scenario.
Using Lemma~\ref{l:arnoldwin} we justify that if some Arnold moves to $a''$, it is a win for the cop-player.
\qed\end{proof}

We have described the complete reduction from the formula game $\F$
to the guarding game $\G$ with prescribed starting positions,
and characterised the properties of various stages of $\G$.
It remains to show that both games have the identical outcome;
we also list some properties of the constructed graph that will be
useful later.

\begin{theorem}\label{t:prescribedgame}
%For every instance of formula satisfying game $\F=(\tau, F_C(C,R), F_R(C,R), \alpha)$ there exists
For every instance of formula satisfying game $\F=(\tau, F_C, F_R, \alpha)$ there exists
a guarding game $\G=(\dir G,V_C,c, S, r)$, $\dir G$ directed, with a prescribed starting
positions such that
player I wins $\F$ if and only if the robber-player wins the game $\G$.
Moreover, the following properties are satisfied:
\begin{enumerate}
  \item The size of the instance $\G$ is $|\G|=O(|\F| \log |\F|)$.
  \item There is no oriented edge $e\in E(\dir G)$ with the endpoint in $r$.
  \item There is at most one cop standing at each vertex $v\in V(\dir G)$.
  \item Consider the set $B=\{u\in V_R \mid \exists v\in V_C,\, (u,v)\in E(\dir G)\}$ of the ``border'' vertices.
        The out-degree of each $u\in B$ is exactly 1.
\end{enumerate}
\end{theorem}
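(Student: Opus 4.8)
The plan is to assemble the gadget lemmas into a single forced round-trip of Robber through the four phases, read the simulation of $\F$ off that forced play, and only afterwards dispatch Assumption~A together with the four structural claims.

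First I would fix the dictionary between configurations: a position of $\G$ in which Robber sits on $\var{RM}$, Commander on $\var{HQ}$, every $\psi$-Arnold on $a_\psi$, and each variable cop on $T_x$ or $F_x$ corresponds to the formula-game position whose assignment makes $x$ true exactly when the cop of $\VAR(x)$ is on $T_x$. Working under Assumption~A (justified below), I then chain the binding-scenario lemmas into a cycle: Lemma~\ref{l:beginning} supplies the prerequisites of the Robber Switches at the start; Lemma~\ref{l:robs} ($\ROBS$, Player~I altering at most one variable of $R$) delivers the prerequisites of the Robber Gates to Lemma~\ref{l:robbergate} ($\ROBG$, the test of $F_R$); this delivers the Cop Switch prerequisites to Lemma~\ref{l:cops} ($\COPS$, Player~II altering at most one variable of $C$); this delivers the Cop Gate prerequisites to Lemma~\ref{l:copgate} ($\COPG$, the test of $F_C$); and the last returns the Robber Switch prerequisites, closing the loop. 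Since every scenario is binding, any cop deviation loses within a few moves and any Robber deviation reaches a losing state (Lemmas~\ref{l:forcing}--\ref{l:blocking}); hence along every play that is not an immediate loss the two configurations stay in lock-step with a legal play of $\F$, with the variable cops faithfully tracking the current assignment and Commander's position enforcing that at most one variable changes per move.

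With this skeleton in place I would match the winning conditions. By Lemma~\ref{l:robbergate}, in the Robber-Test phase Robber reaches $V_C$ exactly when some clause of $F_R$ is satisfied \emph{and} no Arnold has reached $a''$; this mirrors precisely the rule that Player~I wins by making $F_R$ true while $F_C$ is still false, since by the remark following $\ROBG$ an Arnold on $a''$ guards every $z_\varphi$ and disables the gate. Symmetrically, by Lemmas~\ref{l:copgate} and~\ref{l:arnoldwin}, in the Cop-Test phase the cop-player can push an Arnold to $a''$ and hold $V_C$ forever exactly when some clause of $F_C$ is satisfied. Translating a winning strategy of Player~I into Robber's choice of which $\ROBS(y)$ to enter therefore makes $F_R$ true before $F_C$, yielding a Robber win against \emph{any} cop play (the binding scenarios punish every deviation); conversely a Player~II strategy keeping $F_R$ false (and eventually satisfying $F_C$, or holding forever) translates into the cop strategy of Lemma~\ref{l:arnoldwin} or an indefinite block, a cop win. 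Because an infinite play counts as a loss for both Player~I and Robber, the two outcomes coincide.

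The step I expect to be the main obstacle is justifying Assumption~A, i.e.\ that confining the cop-player to strategies in which all Arnolds stay home, save possibly one travelling $a_\psi \to a'_\psi \to a'' \to a'''_\psi$, leaves the value of $\G$ unchanged. One inequality is free, since restricting the cops can only help Robber. For the other I would argue that an optimal cop-player never gains from activating a second Arnold: by Lemma~\ref{l:normalposarnolds} every Arnold must remain in $X$ on pain of losing, by Lemmas~\ref{l:copgate} and~\ref{l:blocking} an Arnold may leave $a_\psi$ without losing only while its own clause is satisfied, and by Lemma~\ref{l:arnoldwin} the instant one Arnold reaches $a''$ the position is already a cop-win, so no further Arnold movement is ever required. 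Hence if the cops can win at all they can win inside the Assumption~A family. Finally I would verify the structural claims: (1) is Lemma~\ref{l:logspace}; (3) holds because the prescribed placement $S$ sits the variable cops, the Arnolds and Commander on pairwise distinct vertices; (4) holds because the only crossings from $V_R$ into $V_C$ occur through the entry vertices $q_{u,S}$ of the Blockers/Enforcers and the edges into $\var{HQ}$, so each border vertex has exactly one out-neighbour in $V_C$, which one checks gadget by gadget; and (2) is settled by inspecting the edges incident to Robber's start vertex introduced in steps~\ref{item:basiccycle}, \ref{item:robs} and~\ref{item:edges}.
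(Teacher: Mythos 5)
Your main argument follows essentially the same route as the paper: the same justification of Assumption~A (one direction free, the other because a single Arnold on $a''$ already decides the game by Lemma~\ref{l:arnoldwin}, so the cop-player never needs more Arnold movement), the same cyclic chaining of Lemmas~\ref{l:beginning}, \ref{l:robs}, \ref{l:robbergate}, \ref{l:cops} and~\ref{l:copgate} to keep $\G$ in lock-step with $\F$, the same matching of winning conditions, Lemma~\ref{l:logspace} for property~1, and ``by construction'' for properties~3 and~4. (A minor slip in property~4: your list of crossings from $V_R$ into $V_C$ omits the edges $(Sw_y,G_y)$ of the Robber Switches and the edges $z'_\varphi\to z_\varphi$ of the Robber Gates; the conclusion that every border vertex has exactly one out-neighbour in $V_C$ still holds, but the gadget-by-gadget check has to include these.)

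There is, however, one genuine error: your treatment of property~2. You claim it ``is settled by inspecting the edges incident to Robber's start vertex introduced in steps~\ref{item:basiccycle}, \ref{item:robs} and~\ref{item:edges}.'' That inspection in fact shows the opposite. The prescribed start is $r=\var{RM}$, and step~\ref{item:basiccycle} makes $(\var{RM},\var{RT},\var{CM},\var{RW},CT)$ a \emph{directed cycle}, so the edge $(CT,\var{RM})$ enters $r$; property~2 is simply false for the construction as built. The paper concedes exactly this and repairs it: introduce a fresh vertex $r'$, add an edge from $r'$ to every out-neighbour of $\var{RM}$, give $r'$ no incoming edges, and declare $r'$ to be Robber's prescribed start; one then observes this does not change the outcome of the game. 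This repair is not cosmetic: property~2 is precisely what Lemma~\ref{l:force} later needs to guarantee that Robber can never return to the start vertex $r$, from which all the new edges to $T\cup S$ emanate, so without the modification the forcing-of-starting-positions step (and hence the reduction proving Theorem~\ref{t:main}) would break. Your proof as written asserts a false statement at this point and is missing the small construction that makes it true.
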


\begin{proof}
For every formula game $\F$ we have described in Section~\ref{s:bigpicture}
a construction of guarding game $\G=(\dir G,V_C,c, S, r)$ with prescribed starting positions.

First, we justify Assumption~A. Enforcer gadget~$\FORCE(S_1,\calA)$ does not allow any Arnold
to move from his starting position unless Robber is on~$CT$. When Robber is on~$CT$, Lemma~\ref{l:copgate}
describes the condition under which $\psi$-Arnold may move from~$a_\psi$ to~$a'_\psi$ and further to~$a''$:
it is precisely when clause~$\psi$ is satisfied. The cop-player may refrain from moving more then one Arnold to~$a''$
even if more then one clause in~$F_C$ is satisfied, and he may also avoid moving another Arnold to~$a''$, if one is already there.
This will not change the outcome of the game as (because of Lemma~\ref{l:arnoldwin}), one Arnold on~$a''$ already
means the game is won by the cop-player.

The above proves that using Assumption~A does not change the outcome of the
game. If there is an Arnold at~$a''$, the cop-player has won. Otherwise,
all Arnolds are at their starting positions; this will be important
in Lemma~\ref{l:robbergate} and~\ref{l:copgate}.

By Lemma~\ref{l:beginning},
the assumptions of Lemma~\ref{l:robs}, are satisfied at the beginning of
the game. By Lemma~\ref{l:robs}, the assumptions of
Lemma~\ref{l:robbergate} are satisfied when Robber moves to~$RT$.
Lemma~\ref{l:robbergate} in turn that ensures we can apply Lemma~\ref{l:cops} in
the next step, and similarly we continue with Lemma~\ref{l:copgate} which
closes the cycle by guaranteeing the assumptions of Lemma~\ref{l:robs}.

This altogether means that $\G$ precisely simulates the game $\F$:
In the Robber Move phase (Lemma~\ref{l:robs}) $\G$ imitates setting of
the variables $R$, which is due to Lemma~\ref{l:robbergate} followed in the Robber Test phase
by test if the robber-player wins.
By Lemma~\ref{l:cops} in the following Cop Move phase the game $\G$ imitates setting of
the variables $C$, which is then due to Lemma~\ref{l:copgate} followed in the Cop Test phase
by a test if the cop-player wins.
The process then repeats.

We have thus found the desired game $\G$ and proved its equivalence with $\F$.
The first property is proved in Lemma~\ref{l:logspace}.
The second property is in fact not true, but it is very easy to modify the construction,
so that it is. We create a vertex $r'$ and connect it by an edge to all out-neighbours of~$r$.
We redefine the starting position of Robber as~$r'$. It is immediate, that this modification
does not change the outcome, and $r'$ has no incoming edge.

The third and the fourth items are true by the construction.
\qed\end{proof}

\subsection{Forcing the starting positions and proof of the main theorem}
\label{s:mainproof}

Next we prove that we can modify our current construction so that it fully conforms
to the definition of the guarding game on a directed graph, i.e., without
prescribing the starting positions.

\begin{lemma}\label{l:force}
Let $\G=(\dir G,V_C,c, S, r)$ be a guarding game with a prescribed starting positions,
such that it satisfies the following initial conditions:
\begin{enumerate}
\item There is no oriented edge $e\in E(\dir G)$ with the endpoint in $r$.
\item There is at most one cop standing at each vertex $v\in V(\dir G)$.
\item Consider the set $B=\{u\in V_R \mid  \exists v\in V_C,\, (u,v)\in E(\dir G)\}$ of the ``border'' vertices.
The out-degree of each $u\in B$ is exactly 1.
\end{enumerate}
Then there exists a guarding game $\G'=(\dir G',V'_C,c')$, $\dir G\subseteq \dir G'$, $V_C\subseteq V'_C$ such that
the following holds:
\begin{enumerate}
\item The robber-player wins $\G'$ if and only if the robber-player wins the game $\G$.
\item If the robber-player does not place Robber on $r$ in his first move, the cops win.
\item If the cop-player does not place the cops to completely cover $S$ in his first move, he will lose.
\item The construction of $\G'$ can be made in log-space.
\item $|\G'| = O(|\G|)$
\end{enumerate}
\end{lemma}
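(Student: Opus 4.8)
The goal of Lemma~\ref{l:force} is to eliminate the prescribed starting positions, turning an instance of \GuardPSP\ into a genuine \Guard\ instance. The plan is to construct $\dir G'$ from $\dir G$ by adding a small ``setup'' gadget that forces, in the first two turns, exactly the configuration we want: Robber on $r$ and the cops covering $S$. I would keep $\dir G$ as an induced subgraph of $\dir G'$ and only add a constant-per-vertex amount of new structure, so that properties~4 and~5 (log-space constructibility and $|\G'|=O(|\G|)$) follow routinely from the fact that the added gadget has size $O(|V(\dir G)|+|S|)=O(|\G|)$ and is produced by one pass over~$S$.

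The core idea exploits the three initial conditions. Since $r$ has no incoming edge (condition~1), I can safely add new edges \emph{into} $r$ to ``funnel'' any legally placed Robber onto $r$ without disturbing the old game: I would add a new robber-region vertex $r_0$ with an edge $(r_0, r)$, and arrange the remaining robber-region of $\dir G'$ so that a Robber who is not placed on $r$ (or the new waiting vertex) is immediately trapped or loses — this gives property~2. To force the cops (condition~3, bounded border out-degree, and condition~2, at most one cop per vertex), I would attach to each cop's prescribed starting vertex $S(i)$ a Blocker-type gadget (using \BLOCK, already available and analyzed in Lemma~\ref{l:blocking}) whose entry vertex can only be guarded by a cop sitting on $S(i)$ in turn two. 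By making Robber's first real move threaten all these entry vertices simultaneously, any cop placement that fails to cover $S$ leaves some entry vertex unguarded, and Lemma~\ref{l:blocking} then yields a robber win — giving property~3. Property~1 follows by combining~2 and~3: along the unique forced line of play the two games coincide, and any deviation by either player is already handled by~2 or~3.

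Concretely, I would introduce a short initial phase on fresh robber-region vertices, say $(r_0, r_1)$ with $r_1$ the new prescribed-looking start, together with one Blocker per cop whose blocked set is $\{S(i)\}$ and whose trigger vertex $u$ lies on this initial path; all blocker entry vertices are joined only from $S(i)$, which is exactly the Blocker hypothesis. When Robber reaches the trigger vertex it is the cops' turn and, since no cop started inside any $\VAR$ or Arnold gadget other than as specified, the only way to guard every entry vertex is to have each $S(i)$ occupied. I would then route Robber from $r_1$ to $r$ so that the subsequent game is verbatim the old $\G$ with its prescribed positions. The verification that this reproduces $\G$ rather than some strictly easier or harder game is exactly Lemma~\ref{l:blocking} applied once per cop, plus the observation that condition~1 guarantees the added in-edges to $r$ create no new options for a Robber who is already committed to the forced line.

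The main obstacle I anticipate is the \emph{simultaneity} of the forcing: a single Blocker forces a single cop, but $c$ cops must all be pinned in one shared turn, so the gadgets must share a common trigger vertex $u$ (Robber's position) while having \emph{disjoint} entry vertices, each reachable in one move only from its own $S(i)$. This is where condition~3 (each border vertex has out-degree exactly~1) and condition~2 (at most one cop per vertex, so $|S|=c$ distinct vertices) are essential: they ensure that no single cop can cheat by guarding two entry vertices at once, which is precisely the injectivity hypothesis appearing in the third part of Lemma~\ref{l:forcing} and in Lemma~\ref{l:blocking}. I would therefore be careful to check that the new entry vertices have in-degree one and that the cop-player cannot reuse a ``setup'' cop to later help inside $\dir G$; since the setup entry vertices sit outside the strongly connected core, a cop that commits to guarding one of them is effectively removed from the rest of the game, so the count $c' = c$ (or $c$ plus a constant) stays tight and the outcome is preserved.
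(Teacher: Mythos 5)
Your overall plan (force Robber's placement, force the cops' placement, then argue the residual game is verbatim $\G$) is the same as the paper's, but two of your three mechanisms have genuine gaps, and the idea you are missing is the same in both cases: the paper uses \emph{extra cops and distance-one threats}, not Blockers. First, the cop-forcing. A Blocker's entry vertex $q_i$ lies in the cop-region, so at placement time the cop-player can guard it by putting a cop directly \emph{on} $q_i$ rather than on $S(i)$; your claim that the entry vertex ``can only be guarded by a cop sitting on $S(i)$'' is false. (Inside the paper's main construction such placements are excluded by the Enforcer and normal-position lemmas, which is why Lemma~\ref{l:blocking} may assume no cop sits on $q_{u,S}$; at the initial placement in your gadget nothing excludes it.) Consequently property~3 fails for your construction: if $\G$ happens to be cop-win even after deleting the cop at $S(i)$, the cop-player can freeze a cop on $q_i$, leave $S(i)$ empty, and still win. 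This also destroys your derivation of property~1, which rested on the line of play being uniquely forced; salvaging the equivalence would require an additional monotonicity argument (a frozen cop outside $\dir G$ is no better for the cop-player than a cop on $S(i)$), which you do not supply. The paper avoids all of this by adding a direct edge from $r$ to \emph{every} vertex that must carry a cop: a vertex threatened at distance one can only be protected by a cop standing on it, so when the number of cops equals the number of threatened vertices the placement is forced to be exactly the intended bijection.

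Second, and more seriously, you have no mechanism for property~2. Since $\dir G\subseteq\dir G'$, Robber may be placed on \emph{any} old robber-region vertex, and you cannot ``arrange the remaining robber-region'' to trap him there without altering $\G$ itself. With $c'=c$ (or $c+O(1)$) cops, the cop-player may simply lack the pawns to defend $V_C$ against a Robber who starts somewhere other than $r$: nothing in the hypotheses says $\G$ is cop-win from other starting positions. This is precisely why the paper sets $c'=c+m$, where $m=|\{v\in V_C \mid (u,v)\in E(\dir G),\, u\in V_R\}|$: if Robber deviates, the cops blanket all $m$ entry points of $V_C$ and win outright, regardless of where Robber stands. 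To neutralize these $m$ extra cops on the intended line of play, the paper adds $m$ new cop-region vertices $T$ with no out-edges, reachable only from $r$, which the extra cops are then forced to occupy forever (Robber at $r$ threatens all of $T\cup S$ at distance one, and $c'=|T\cup S|$). Note also that condition~1 is what makes $T$ unreachable, hence harmless, when Robber deviates, and condition~2 is what gives $|S|=c$ for the counting; your appeal to condition~3 to prevent one cop from guarding two entry vertices misreads that condition, which constrains out-degrees of robber-region border vertices, not cop multiplicities.
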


\begin{proof}
Let $m=|\{v\in V_C \mid  (u,v)\in E(\dir G),\, u\in V_R\}|$ be the number of vertices from $V_C$
directly threatened (i.e., in distance 1) from robber-region.

Let us define the graph $\dir G'=(V',E')$ such that
$V'=V(\dir G)\cup\{r\}\cup T$ where $T=\{t_1,\dots,t_m\}$ is the set of new vertices and
$E'=E(\dir G)\cup \{(r,v) \mid  v\in T\cup S\}$.
Consider the game $\G'=(\dir G',V'_C,c')$ where $V'_C=V_C\cup T$ and $c'=c+m$.
See Fig.~\ref{f:startforce} for illustration.

\picture{startforce}{Forcing starting positions}{0.6}

Suppose that the robber-player places Robber in the first move to some vertex $v\in V_R\setminus\{r\}$.
Then there are $m$ vertices in $V_C$ directly threatened by edges going from $V_R$ and because
we have at least $m$ cops available, the cops in the second move can occupy all
these vertices and prevent Robber from entering $V_C$ forever.
So Robber must start at the vertex $r$. Then observe, that $c$ cops must occupy the positions $S$
and $m$ cops must occupy the vertices $T$. If any cop does not start either on $T$ or $S$,
Robber wins in the next move. The cops on $T$ remain there harmless to the end of the game.
The cops cannot move until Robber decides to leave the vertex $r$.
After that, the vertices in $V'\setminus V$ no longer affect the game,
thus $\G'$ exactly imitates $\G$.

The construction of $\dir G'$ from $\dir G$ can be realised by an iteration over the set
of vertices and edges of $\dir G$, which means that the construction can be done in log-space.
It is easy to see, that the number of vertices and edges we add is linear in the number of vertices
of $\dir G$, proving the last statement.
\qed\end{proof}

Here we give the precise definition of log-space reductions via length order.
Let $\Sigma=\{0,1\}$ denote the alphabet and let $\Sigma^+$ denote all nonempty
and finite words over the alphabet $\Sigma$.
The function $f:\Sigma^+\to\Sigma^+$ is \emph{logspace-computable} if there
is a deterministic Turing machine with a separate two-way read-only input tape,
a read/write work tape, and a one-way output tape such that, when started with any
word $w\in\Sigma^+$ on the input tape, the machine eventually halts
with $f(w)$ on the output tape while having visited at most $\log|w|$ cells on the work tape.
Let $\ell:\N\to\R$. The function $f$ is \emph{length $\ell(n)$ bounded}
if $|f(w)|\le \ell(|w|)$ for all $w\in\Sigma^+$.

Let $A,B\subseteq\Sigma^+$.
\emph{$A$ transforms to $B$ within logspace via $f$} (denoted by $A\logred B$ via $f$)
if $f:\Sigma^+\to\Sigma^+$ is logspace-computable function such that
$w\in A \Leftrightarrow f(w)\in B$ for all $w\in\Sigma^+$.

Let $B$ be a language and let $\mathcal{L}$ be a class of languages.
Then $\mathcal{L}\logred B$ if $A\logred B$ for all $A\in\mathcal{L}$.
Furthermore, $\mathcal{L}\logred B$ \emph{via length order $\ell(n)$, $\ell:\N\to\R$},
provided that for each $A\in\mathcal{L}$ there is a function $f$ and constant $b\in\N$
such that $A\logred B$ via $f$ and $f$ is length $b\ell(n)$ bounded.
The language $B$ is \emph{$\mathcal{L}$-complete under log-space reductions}
if both $B\in\mathcal{L}$ and $\mathcal{L}\logred B$.

\begin{proof}[of Theorem \ref{t:main}]
By Lemma~\ref{l:etime}, $\Guard\in\E$.
Consider a problem $L \in \E$.
By Theorem~\ref{t:stocha}, $L$ can be reduced to the formula satisfying game $\F$ in log-space
via length-order $n \log n$.
By Theorem~\ref{t:prescribedgame} there exists an equivalent (in the terms of the
game outcome) guarding game $\G=(\dir G,V_C,c, S,r)$ with prescribed starting positions,
together with other properties as stated by Theorem~\ref{t:prescribedgame}.
By Lemma~\ref{l:force} applied on the game $\G$
there is an equivalent guarding game $\G'$, $\G\subseteq \G'$, without prescribed starting positions.
Thus, we reduced $L$ to \Guard.
By Lemma~\ref{l:logspace} and Lemma~\ref{l:force}, the whole reduction can be done in log-space,
via length order $n (\log n)^2$.
Therefore, Theorem~\ref{t:main} is proved.
\qed\end{proof}

\begin{proof}[of Corollary \ref{c:maintime}]
By the time hierarchy theorem, if $1 < a < b$, there is a problem $X$ in
$\DTIME(b^n) \setminus \DTIME(a^n)$.
Thus, $X$ is in $\E$ and we can use the reduction from Theorem~\ref{t:main}.
For an instance $I$ of $X$ of size $n$, we can
construct in polynomial time an instance $I'$ of \Guard\ of size $m = O(n (\log n)^2)$ so that
deciding $I'$ solves the instance~$I$.
By choice of $n$, this cannot be done in time $a^n$ for all $n$.
It remains to check that $n = \Omega(m/(\log m)^2)$.
\qed\end{proof}

\section*{Further questions}

As we have already mentioned, the relation of the classes \PSPACE\ and \E\ is
unclear as we only know that $\PSPACE \ne \E$ and the current state of the art
is missing some deeper understanding of the relation. Therefore, the conjecture
of Fomin et al.\ whether \Guard\ is \PSPACE-complete still remains open.
However, we believe that the conjecture is not true.

For a guarding game $\G=(G,V_C,c)$, what happens if we restrict the size of strongly connected components of $G$?
If the sizes are restricted by 1, we get DAG, for which the decision problem is \PSPACE-complete.
For unrestricted sizes we have shown that $\G$ is \E-complete.
Is there some threshold for $\G$ to become \E-complete from being \PSPACE-complete?
This may give us some insight into the original conjecture.
Another interesting question is what happens if we bound the degrees of~$G$ by a constant.
%Furthermore, what happens if we restrict the degrees of vertices of $G$ or the treewidth of $G$?
We are also working on forcing the starting position in the guarding game on undirected graphs
in a way similar to Theorem~\ref{t:main}.

\section*{Acknowledgements}

We would like to thank Ruda Stola\v r for drawing initial versions of some
pictures and for useful discussion.
We thank Peter Golovach for giving a nice talk about the problem,
which inspired us to work on it.
We would also like to thank Jarik Ne\v set\v ril for suggesting some of the previous open questions
and to Honza Kratochv\'\i l for fruitful discussion of the paper structure.

\end{document}